\newtheorem{theorem}{Theorem}[section]
\newtheorem{lemma}[theorem]{Lemma}
\newtheorem{proposition}[theorem]{Proposition}
\newtheorem{corollary}[theorem]{Corollary}
\newtheorem{definition}[theorem]{Definition}
\newtheorem{problem}[theorem]{Problem}
\title{Quadrilateral Mesh Generation II : Meromorphic Quartic Differentials and Abel-Jacobi Condition}
\author[1]{Na Lei}
\author[1]{Xiaopeng Zheng}
\author[1]{Zhongxuan Luo}
\author[2]{Feng Luo}
\author[3]{Xianfeng Gu}
\affil[1]{Dalian University of Technology}
\affil[2]{Rutgers University}
\affil[3]{Stony Brook University}
\begin{document}
\maketitle
\begin{abstract}%
This work discovers the equivalence relation between quadrilateral meshes and meromorphic quartic differentials. Each quad-mesh induces a conformal structure of the surface, and a meromorphic quartic differential, where the configuration of singular vertices correspond to the configurations of the poles and zeros (divisor) of the meroromorphic differential. Due to Riemann surface theory, the configuration of singularities of a quad-mesh satisfies the Abel-Jacobi condition. Inversely, if a divisor satisfies the Abel-Jacobi condition, then there exists a meromorphic quartic differential whose divisor equals to the given one. Furthermore, if the meromorphic quadric differential is with finite trajectories, then it also induces a a quad-mesh, the poles and zeros of the meromorphic differential correspond to the singular vertices of the quad-mesh.

Besides the theoretic proofs, the computational algorithm for verification of Abel-Jacobi condition is also explained in details. Furthermore, constructive algorithm of meromorphic quartic differential on genus zero surfaces is proposed, which is based on the global algebraic representation of meromorphic differentials.

Our experimental results demonstrate the efficiency and efficacy of the algorithm. This opens up a novel direction for quad-mesh generation using algebraic geometric approach.
\end{abstract}

\keywords{Quadrilateral Mesh \and Flat Riemannian Metric \and Geodesic \and Discrete Ricci flow \and Conformal Structure Deformation}

\section{Introduction}

%\begin{figure}[tp]
%\centering
%\begin{tabular}{c}
%\includegraphics[width=0.7\textwidth]{hyperbolic_structure.pdf}
%\end{tabular}
%\caption{Wasserstein GAN framwork. \label{fig:GAN_framework}}
%\end{figure}

\begin{figure}[h!]
\centering
\begin{tabular}{ccc}
\includegraphics[height=0.3\textwidth]{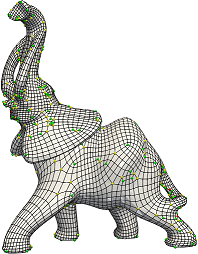}&
\includegraphics[height=0.3\textwidth]{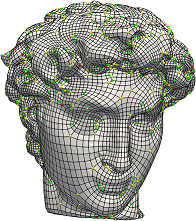}&
\includegraphics[height=0.3\textwidth]{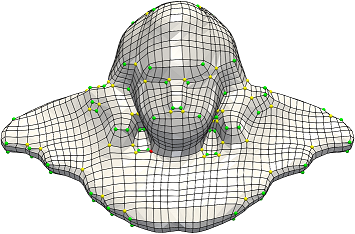}\\
\end{tabular}
\caption{Quadrilateral meshes with singularities on them, green, yellow and red dots represent singularities with valence $3$, $5$ and $6$ respectively.
\label{fig:quad_meshes}}
\end{figure}

\subsection{Motivation} Quadrilateral meshes play a fundamental role in computational mechanics, geometric modeling, computer aided design, animation, digital geometry processing and many fields. Despite of tens of years' research efforts, rigorous and automatic algorithms to produce high quality quad-meshes have not been achieved yet. The theoretic understanding of the singularity configurations (the locations and valences) still remains preliminary. This work focuses on giving sufficient and necessary conditions for singularity configurations based on Riemann surface theory.

More specifically, given a closed surface $\Sigma$ embedded in the Euclidean space $\mathbb{R}^3$, it has the induced Euclidean Riemannian metric $\mathbf{g}$. Suppose the surface is tessellated by a quadrilateral mesh $\mathcal{Q}$. If a vertex in $Q$ with topological valence $4$, then the vertex is \emph{regular}, otherwise \emph{singular}. The singularity configuration of $Q$ is represented as the \emph{divisor} of $Q$, defined as
\begin{equation}
    D_Q := \sum_{v\in Q} (k(v)-4) v,
    \label{eqn:divisor}
\end{equation}
where $v$ is a vertex of $Q$, $k(v)$ is the topological valence of $v$. The goal of this work is to find the necessary and sufficient conditions for quad-mesh divisors.

\subsection{Quad-mesh induced Structures}

A quad-mesh $\mathcal{Q}$ naturally induces several structures, each of them gives some information about the singularity configuration. The induced conformal structure gives the most fundamental and complete information.

\paragraph{Combinatorial structure} Suppose the number of vertices, edges, faces of $\mathcal{Q}$ are $V,E,F$, then $E=2F$, $\sum kn_k = 4F$, $\sum n_k = V$, where $n_k$ is the number of vertices with valence $k$, furthermore Euler formula holds, $V+F-E=\chi(\Sigma)$, where $\chi(\Sigma)$ is the Euler characteristic number of $\Sigma$.

\paragraph{Riemannian metric structure} If each face of $\mathcal{Q}$ is treated as the unit planar square, a flat Riemannian metric with cone singularities is induced, denoted as $\mathbf{g}_Q$. A vertex with $k$-valence has discrete curvature $(4-k)/2\pi$, the \emph{Gauss-Bonnet condition} shows the total curvature equals to the product of $2\pi$ and the Euler characteristic number $\chi(\Sigma)$. This implies the degree of the divisor equals
\begin{equation}
    \sum_{v\in Q}  (k(v)-4)  = -8\chi(\Sigma).
    \label{eqn:gauss_bonnet}
\end{equation}
The holonomy group induced by the metric $\mathbf{g}_Q$ on the surface with punctures at the singular vertices is the rotation group
\begin{equation}
\{e^{i\frac{\pi}{2}k},k\in \mathbb{Z}\}.
\label{eqn:holonomy}
\end{equation}
This is the so-called \emph{holonomy condition}. Furthermore, if we connect the horizontal and vertical edges of the quad-faces, we get geodesic loops. If we subdivide the quad-mesh infinitely many times, we obtain geodesic lamination, each leaf is a closed loop. This is called the \emph{finite geodesic lamination condition} \cite{CMAME_Quad_Mesh_I}.

\paragraph{Conformal Structure} In this work, we show that the quad-mesh $\mathcal{Q}$ induces a conformal structure, and can be treated as a Riemann surface $S_Q$. Furthermore, it induces a meromorphic quartic differential $\omega_Q$, whose trajectories are finite. Naturally the quad-mesh divisor $D_Q$ is equivalent to the divisor of $\omega_Q$. By Abel theorem, the divisor satisfies the Abel-Jacobi condition in Eqn.~\ref{eqn:abel_condition}. Inversely, if a point configuration $D$ satisfies the Abel-Jacobi condition, then there must be a meromorphic quartic differential $\omega$, whose divisor equals to $D$. If the trajectories of $\omega$ are finite, then $\omega$ induces a quad-mesh.

\paragraph{Comparison Between Structures}

The metric structure $\mathbf{g}_Q$ gives partial information about the divisor $D_Q$; whereas the conformal structure $S_Q$ and the meromorphic quartic differential $\omega_Q$ gives more thorough information about $D_Q$.

Given any divisor $D=\sum_k n_k p_k$, where $n_k$ is an integer, and the total number of $p_k$'s is finite. If $\sum n_k = -4\chi(\Sigma)$, then one can construct a flat Riemannian metric conformal to the original metric using Ricci flow \cite{}. The flat metric is with cone singularities at $p_k$, where the angle equals to $\pi(n_k+4)/2$. But this flat metric may not satisfies the holonomy condition in Eqn.~(\ref{eqn:holonomy}). If furthermore the divisor $D$ satisfies the Abel-Jacobi condition, then the obtained flat metric satisfies the holonomy condition.

We use a simple example to demonstrate the power of Abel-Jacobi condition. The following question is raised in \cite{Myles:2014:RFG:2601097.2601154}:
\begin{problem} Is there a quad-mesh on a closed torus, such that it has only two singularities, one valence $3$ vertex and one valence $5$ vertex, other vertices are regular (with valence $4$) ?
\label{prob:1}
\end{problem}
From heuristic experiments, it seems that such a quad-mesh does not exist. But it is difficult to find a rigorous argument: from topological point of view, the connectivity satisfies the Euler equation; from geometric point of view, there exists a flat Riemannian metric with the two cone singularities with curvature $\pi/2$ and $-\pi/2$ corresponding to the valence $3$ and valence $5$ vertices. But by Abel-Jacobi condition, we can show such kind of quad-mesh doesn't exist in corollary \ref{cor:genus_one}.

\subsection{Contributions}
This work opens a novel direction for quad-mesh generation based on Riemann surface theory:
\begin{enumerate}
    \item To the best of our knowledge, this is the first work that discovers the intrinsic connection between quadrilateral meshes and meromorphic quartic differentials (theorems \ref{thm:quad_differential} and \ref{thm:differential_quad}).
    \item This work gives the necessary and sufficient conditions for the singularity configuration, the Abel-Jacobi conditions for divisors \ref{thm:Abel_Jacobian_condition}.
    \item This work proposes to generate a quad-mesh by construct the corresponding meromorphic quartic differential with global algebraic representation.
\end{enumerate}

The work is organized as follows: section \ref{sec:previous_works} briefly review the most related works; section \ref{sec:theory} introduces the theoretic background, section \ref{sec:quad_differential} proves the equivalence between quad-meshes and meromorphic quartic differentials, Abel-Jacobi conditions; section \ref{sec:algorithm} explains the algorithm in details, and give simple examples to verify Abel-Jacobi conditions and construct meromorphic quartic differentials; finally, the work concludes in section \ref{sec:conclusion}.

\section{Previous Works}
\label{sec:previous_works}

There are many approaches for quadrilateral mesh generation, a thorough survey can be found in \cite{survey:Bommes2013Quad}. In the following, we only briefly mention the most related works.

\paragraph{Triangle Mesh Conversion} Quad-meshes can be generated by conversion from triangular meshes directly. The simplest way is to insert the bary-centers of faces and edges to obtain the initial quad-mesh, then perform Catmull-Clark subdivision. Alternatively, two original adjacent triangles can be fused into one quadrilateral to form a quad-mesh \cite{Gurung2011SQuad,Remacle2012Blossom,Marco2010Practical,Velho20014}. This approach can only produce unstructured quad-meshes, the quad-mesh quality is determined by the input triangle mesh.

\paragraph{Patch Based Approach}

This approach computes the coarsest level quadrilateral tessellation first, where the cutting graph is called the skeleton, then coarse mesh is subdivided to obtain finer level quad-meshes. The skeleton can be generated by clustering method, which merge neighboring triangles into a patch, including normal-based and center-based methods \cite{Boier2004Parameterization,Carr2006Rectangular}. Another method is to deform the input surface into a polycube shape, which is the union of cubes, then the faces of the polycube give the patches \cite{Xia2011Editable,Wang2008User,Lin2008Automatic,He2009A}. This approach can generated semi-regular quad-meshes.

\paragraph{Parameterization Based Approach}
Parameterization based approach computes the quadrilaterial tessellation in the parameter domain, or finds the skeleton from intrinsic geometric functions or differentials.
The spectral surface quadrangulation method \cite{Dong2006Spectral,Huang2008Spectral} produces the skeleton structure from the Morse-Smale complex of an eigenfunction of the Laplacian operator on the input mesh. Discrete harmonic forms \cite{Tong2006Designing}, periodic Global Parameterization \cite{Alliez2006Periodic} and Branched Coverings method \cite{K2010QuadCover} are all based on parameterization for quad mesh generation.

\paragraph{Voronoi Based Method} Centroidal Voronoi Tessellation (CVT) produces a surface cell decomposition with uniform cell sizes and shapes. The method in \cite{L2010Lp} generalizes the CVT from $L^2$ distance to general $L^p$ distance, when $p$ goes to infinity, the cells tens to be quadrilateral, this method allows for aligning the axes of the Voronoi cells with a predefined background tensor field. This method can only produce non-structured quad-mesh.

\paragraph{Cross field Based Approach}
One of the most popular approach is cross field guided quad-mesh generation. Each algorithm first choose a way to represent a cross, for example N-RoSy representation\cite{Palacios2007Rotational}, period jump technique\cite{Li2006Representing} and complex value representation\cite{Kowalski2013A}; then the algorithm usually generate a smooth cross field by energy minimization technique, such as discrete Dirichlet energy optimization\cite{JFH}. In the end, based on the obtained cross field, these approaches generate the quad meshes by using streamline tracing techniques\cite{RS:RPS:2014} or parameterization method \cite{survey:Bommes2013Quad}.The cross field guided quad mesh generation method can be very useful and flexible. However it is difficult to control the position of the singularities and the structures of the quad layout directly.

The work in \cite{landau} relates the Ginzberg-Landau theory with the cross field for genus zero surface case. This work further generalizes the work in \cite{landau} by relating Riemann surface theory with the cross fields for surfaces with arbitrary topologies. In theory, cross field gives the horizontal/vertical directions of a meromorphic quartic differential, but ignores the amplitude. Therefore, a meromorphic differential is a more precise representation of a quad-mesh.

\paragraph{Metric Based Approach} A quad-mesh induces a flat metric with cone singularities. Furthermore, the work in \cite{CMAME_Quad_Mesh_I} shows the holonomy group of the metric has special properties. Therefore the method in \cite{CMAME_Quad_Mesh_I} proposes to construct a flat metric using ricci flow algorithm with singularities at the given points, such that a quad-mesh can be induced when the holonomy conditions are met. The existence of the solution to the Ricci flow has theoretic guarantees. However the holonomy condition heavily depends on the singularity configuration. The work in \cite{CMAME_Quad_Mesh_I} didn't answer when the singularity configurations are appropriate for the holonomy condition.

In contrast, the current work gives the sufficient and necessary condition for the singularity configuration in order to satisfy the holonomy requirements: the Abel-Jacobi condition.

\paragraph{Holomorphic Differential Approach}

In \cite{Zheng2019} and \cite{lei2017generalized}, the holomorphic quadratic differential is utilized to generate quad-meshes and hex-meshes, this approach produces quad-meshes with least singularities and highest smoothness. However, this approach can not model singularities with odd topological valences, which greatly prevents the method for general applications in practice.

The current work is a direct generalization of this approach, by generalizing holomorphic quadratic differentials to more general meromorphic quartic differentials. This conquers the difficulties raised in the holomorphic differential method, and covers all possible quad-meshes.

Comparing with all existing approaches, current work shows the equivalence between quad-meshes and meromorphic quartic differentials, and gives the Abel-Jacobi condition for singularities. This picture is most general and complete, generalizes most existing approaches including cross fields, Strebel differential and metric based methods.

\section{Theoretic Background}
\label{sec:theory}
This section briefly review the basic concepts and theorems in Riemann surface theory, details can be found in \cite{}.

\subsection{Riemann Surface}

\begin{definition}[Topological Manifold] Suppose $\Sigma$ is a topological space, $\{U_\alpha\}$ is a family of open sets covering the space, $\Sigma\subset \bigcup_\alpha U_\alpha$. For each open set $U_\alpha$, there exists a homeomorphism $\varphi_\alpha:U_\alpha \to \mathbb{R}^n$, the pair $(U_\alpha,\varphi_\alpha)$ is called a local  chart. The collection of local charts form the atlas of $M$, $\mathcal{A}=\{(U_\alpha,\varphi_\alpha)\}$. For any pair of open sets, $U_\alpha$ and $U_\beta$, if $U_\alpha\cap U_\beta \neq \emptyset$, the transition map is given by $\varphi_{\alpha\beta}:\varphi_\alpha(U_\alpha\cap U_\beta)\to \varphi_\beta(U_\alpha\cap U_\beta)$, $\varphi_{\alpha\beta} = \varphi_\beta\circ \varphi_\alpha^{-1}$.
Then $\Sigma$ is called a closed $n$-dimesnional manifold.
\end{definition}
Two dimensional manifolds are called surfaces.

\begin{definition}[Conformal Atlas] Suppose $S$ is a two dimensional topological manifold, equipped with an atlas $\mathcal{A}=\{(U_\alpha,\varphi_\alpha)\}$, every local chart are complex coordinates $\varphi_\alpha:U_\alpha\to \mathbb{C}$, denoted as $z_\alpha$, and every transition map is biholomorphic,
\[
    \varphi_{\alpha\beta}:\varphi_\alpha(U_\alpha\cap U_\beta)\to \varphi_\beta(U_\alpha\cap U_\beta), \quad z_\alpha \mapsto z_\beta,
\]
then the atlas is called a conformal atlas.
\end{definition}

\begin{definition}[Riemann Surface] A topological surface with a conformal atlas is called a Riemann surface.
\end{definition}

\begin{definition}[biholomorphic map] Suppose $f:(S,\{(U_\alpha, \varphi_\alpha\})$ to $(T,\{(V_\beta,\psi_\beta)\})$ is a map between two Riemann surfaces, if every local representation
\[
    \psi_\beta \circ f \circ \varphi_\alpha^{-1}: \varphi_\alpha(U_\alpha)\to \psi_\beta(V_\beta)
\]
is biholomorphic, then $f$ is called a biholomorphic map between Riemann surfaces, namely a conformal map.
\end{definition}

Suppose $(S,\mathbf{g})$ is an oriented surface with a Riemannian metric $\mathbf{g}$. For each point $p\in \Sigma$, we can find a neighborhood $U(p)$, inside $U(p)$ the \emph{isothermal coordinates} $(u,v)$ can be constructed, such that $\mathbf{g}=e^{2\lambda(u,v)}(du^2+dv^2)$. The atlas formed by all the isothermal coordinates is a conformal atlas, therefore the surface $(S,\mathbf{g})$ is a Riemann surface:

\begin{theorem}
All oriented surfaces with Riemannian metrics are Riemann surfaces.
\end{theorem}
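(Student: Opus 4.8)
The plan is to verify that the atlas of all isothermal charts described just before the statement is in fact a conformal atlas, which by the preceding definitions immediately exhibits $(S,\mathbf{g})$ as a Riemann surface. The argument splits into one genuinely analytic step and one essentially algebraic step, and I would organize it accordingly.

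First I would establish the local existence of isothermal coordinates, since everything rests on it. Writing the metric in a coordinate patch as $\mathbf{g} = E\,du^2 + 2F\,du\,dv + G\,dv^2$ with $EG - F^2 > 0$, the task is to produce near each point a diffeomorphism $(u,v)\mapsto(x,y)$ for which $\mathbf{g} = e^{2\lambda}(dx^2 + dy^2)$. This is equivalent to solving a Beltrami equation $\partial_{\bar w} f = \mu\,\partial_w f$ for the complex coordinate $f = x + iy$, where the Beltrami coefficient $\mu$ is the explicit expression built from $E,F,G$ and satisfies $|\mu| < 1$ precisely because $\mathbf{g}$ is positive definite. For smooth $\mathbf{g}$ the Korn--Lichtenstein theorem supplies a local solution with nonvanishing Jacobian, and the measurable Riemann mapping theorem covers metrics of lower regularity. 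I expect this to be the main obstacle: it is the only place where nontrivial PDE and elliptic-regularity input is required, and the entire theorem hinges on it.

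Second, granting isothermal coordinates everywhere, I would check that the transition map between two overlapping isothermal charts $z_\alpha = x_\alpha + i y_\alpha$ and $z_\beta = x_\beta + i y_\beta$ is biholomorphic. On the overlap the single metric $\mathbf{g}$ is expressed conformally in both coordinates, so $e^{2\lambda_\alpha}|dz_\alpha|^2 = e^{2\lambda_\beta}|dz_\beta|^2$; hence the differential of the transition multiplies every tangent vector by a common positive scalar and therefore preserves angles. A smooth angle-preserving map between planar domains is either holomorphic or anti-holomorphic, and orientability---which I would use to select orientation-compatible isothermal charts throughout---rules out the anti-holomorphic case. Thus the transition satisfies the Cauchy--Riemann equations and is biholomorphic.

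These two steps together show the isothermal atlas is a conformal atlas in the sense of the Conformal Atlas definition, so $(S,\mathbf{g})$ carries the structure of a Riemann surface, completing the argument. I would close by remarking that orientability is essential exactly at the transition-map step, where it forces the transitions to be holomorphic rather than merely conformal.
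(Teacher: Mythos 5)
Your proof is correct and follows the same route as the paper, which simply asserts the existence of isothermal coordinates and declares that the resulting atlas is conformal. You supply the two details the paper leaves implicit---the Korn--Lichtenstein/Beltrami-equation input for local existence, and the argument that orientation-compatible isothermal transitions are angle- and orientation-preserving, hence biholomorphic---so your write-up is a faithful, fleshed-out version of the paper's sketch.
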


Two metrics $\mathbf{g}_1,\mathbf{g}_2$ on a surface $\Sigma$ are \emph{conformal equivalent} to each other, if there is a scalar function $\lambda:S\to\mathbb{R}$, such that $\mathbf{g}_1 = e^{2\lambda}\mathbf{g}_2$.

\subsection{Meromorphic Functions and Differentials}

\begin{definition}[Holomorphic Function] Suppose $f:\mathbb{C}\to\mathbb{C}$ is a complex function, $(x,y)\mapsto (u(x,y),v(x,y))$, if the function satisfies the Cauchy-Riemann equation
\[
    \frac{\partial u}{\partial x} = \frac{\partial v}{\partial y},
    \quad\frac{\partial u}{\partial y} = -\frac{\partial v}{\partial x},
\]
then $f$ is called a holomorphic function. If $f$ is invertible, furthermore $f^{-1}$ is also holomorphic, then $f$ is called biholomorphic.
\end{definition}

\begin{definition}[Meromorphic Function] Suppose $f:\mathbb{C}\to\mathbb{C}\cup \{\infty\}$ is a complex function, $f(z)=p(z)/q(z)$, where $p(z)$ and $q(z)$ are holomorphic functions, then $f(z)$ is called a meromorphic function.
\end{definition}

\begin{definition}[Laurent Series] The Laurent series of a meromorphic function about a point $z_0$ is given by
\[
    f(z) = \sum_{n=k}^\infty a_n(z-z_0)^n,
\]
the series $\sum_{k\ge 0}^\infty a_n(z-z_0)^n$ is called the analytic part of the Laurent series; the series $\sum_{n<0} a_n(z-z_0)^n$ is called the principal part of the Laurent series. $a_{-1}$ is called the residue of $f$ at $z_0$.
\end{definition}

\begin{definition}[Zeros and Poles]
Given a meromorphic function $f(z)$, if its Laurent series has the form
\[
    f(z) = \sum_{n=k}^\infty a_n(z-z_0)^n,
\]
if $k>0$, then $z_0$ is a zero point of order $k$; if $k<0$, then $z_0$ is called a finite pole of $f(z)$ of order $k$; if $k=0$, then $z_0$ is called a regular point. The order of a zero or a pole at the point $p$ of $f$ is denoted as $\nu_p(f)$.
\end{definition}

The concepts of holomorphic and meromorphic functions can be generalized to Riemann surfaces.

\begin{definition}[Meromorphic Function on Riemann Surface] Suppose a Riemann surface $(S,\{(U_\alpha, z_\alpha)\})$ is given. A complex function is defined on the surface $f:S\to \mathbb{C}\cup \{\infty\}$. If on each local chart $(U_\alpha, z_\alpha)$, the local representation of the functions $f\circ \varphi_\alpha^{-1}:\mathbb{C}\to \mathbb{C}\cup\{\infty\}$ is meromorphic, then $f$ is called a meromorphic function defined on $S$.
\end{definition}
A memromorphic function can be treated as a holomorphic map from the Riemann surface to the unit sphere.

\begin{definition}[Meromorphic Differential] Given a Riemann surface $(S,\{z_\alpha\})$, $\omega$ is a meromorphic differential of order $n$, if it has local representation,
\[
    \omega = f_\alpha(z_\alpha) (dz_\alpha)^n,
\]
where $f_\alpha(z_\alpha)$ is a meromorphic function, $n$ is an integer; if $f_\alpha(z_\alpha)$ is a holomorphic function, then $\omega$ is called a holomorphic differential of order $n$.
\end{definition}
A holomorphic differential of order $1$ is called a \emph{holomrphic 1-form};
A holomorphic differential of order $2$ is called a \emph{holomrphic quadratic differential}; A meromorphic differential of order $4$ is called a \emph{meromorphic quartic differential}.

%\begin{theorem}[Holomorphic q-differential] Assume $S$ is a compact Riemann surface with genus $g$, $q$ is an integer, then the dimension of the complex linear space $\Omega^q$ of holomorphic $q$-differentials
%\begin{itemize}
%    \item when $g=0$,
%    \[
%        \text{dim} ~~\Omega^q = \left\{
%        \begin{array}{ll}
%        0, & q \ge 1\\
%        1-2q, & q \le 0
%        \end{array}
%        \right.
%    \]
%    \item when $g=1$, $\text{dim}~~\Omega^q = 1$, $\forall q \in \mathbb{Z}$
%    \item when $g>1$,
%    \[
%        \text{dim} ~~\Omega^q = \left\{
%        \begin{array}{ll}
%        0, \quad q < 0&\\
%        1, \quad q = 0&\\
%        g, \quad q = 1&\\
%        (2q-1)(g-1)& q>1
%        \end{array}
%        \right.
%    \]
%\end{itemize}
%\end{theorem}

\begin{figure}[h!]
\centering
\begin{tabular}{cc}
\includegraphics[height=0.45\textwidth]{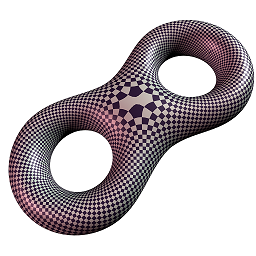}
&\includegraphics[height=0.45\textwidth]{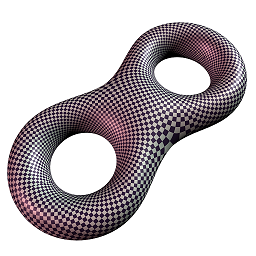}\\
\end{tabular}
\caption{Zeros on holomorphic 1-forms on a genus two surface.}
\label{fig:genus_two}
\end{figure}
\begin{definition}[Zeros and Poles of Meromorphic Differentials] Given a Riemann surface $(S,\{z_\alpha\})$, $\omega$ is a meromorphic differential with local representation,
\[
    \omega = f_\alpha(z_\alpha) (dz_\alpha)^n.
\]
If $z_\alpha$ is a pole (or a zero) of $f_\alpha$ with order $k$, then $z_\alpha$ is called a pole (or a zero) of the meromorphic differential $\omega$ of order $k$.
\end{definition}

We use $Sing_\omega$ to denote the singularity set of $\omega$. Locally near a regular point $p$, the differential $\omega=f(z)(dz)^n$ can be represented as the $n$-th power of a 1-form $h(z)dz$ where $h^n(z)=f(z)$ and thus $h(z)=\sqrt[n]{f(z)}$ coincides with one of $n$ possible branches of the $n$-th root. We call this $n$-valued 1-form the \emph{$n$-th roots} of $\omega$, which is a globally well-defined multi-valued meromorphic 1-form on $S$.

\begin{figure}[h!]
\centering
\begin{tabular}{cc}
\includegraphics[height=0.4\textwidth]{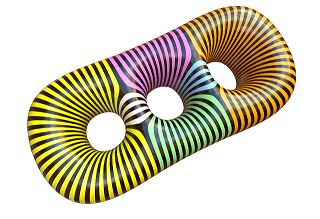}
&\includegraphics[height=0.35\textwidth]{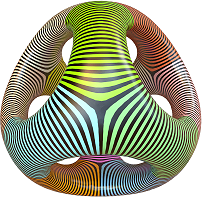}\\
\end{tabular}
\caption{Horizontal trajectories of holomorphic quadratic differentials.}
\label{fig:foliations}
\end{figure}

\begin{definition}[Trajectories of meromorphic differentials]Given a meromorphic $n$-differential $\omega$ on $S$ we define $n$ distinct line fields on $S\setminus Sing_\omega$ as follows. At each non-singular point $z$ there are exactly $n$ distinguished directions $dz$ at which $\omega=f(z)(dz)^n$ attains real values. Integral curves of these line fields are called trajectories of $\omega$.
\end{definition}

Suppose $\omega$ is a meromorphic quadratic differential, $dz$ is a \emph{horizontal (vertical) direction} if $f(z)(dz)^2>0$ ($f(z)(dz)^2<0$). Integral curves of horizontal direction are called \emph{horizontal (vertical) trajectories}.

\begin{definition}[Strebel Differential] A meromorphic quadratic differential is called a Strebel differential, if all its horizontal trajectories are finite.
\end{definition}

Fig.~\ref{fig:foliations} shows the horizontal trajectories of Strebel differentials. Note that the vertical trajectories of a Strebl differential may not necessarily be finite.

\subsection{Divisor}

\begin{definition}[Divisor] The Abelian group freely generated by points on a Riemann surface is called the divisor group, every element is called a divisor, which has the form
\[
    D = \sum_p n_p p,
\]
where only a finite number of $n_p$'s are non-zeros. The degree of a divisor is defined as $deg(D)=\sum_p n_p$. Suppose $D_1 = \sum_p n_p p$, $D_2 = \sum_p m_p p$, then $D_1\pm D_2 = \sum_p(n_p\pm m_p)p$; $D_1\le D_2$ if and only if for all $p$, $n_p \le m_p$.
\end{definition}

\begin{definition}[Meromorphic Function Divisor] Given a meromorphic funciton $f$ defined on a Riemann surface $S$, its divisor is defined as
\[
    (f) = \sum_p \nu_p(f)p.
\]
\end{definition}
The divisor of a meromorphic differential $\omega$ is defined in the similar way.

\begin{definition}[Meromorphic Differential Divisor] Suppose $\omega$ is a meromorphic differential on a Riemann surface $S$, suppose $p\in S$ is a point on $S$, we define the order of $\omega$ at $p$ as
\[
    \nu_p(\omega) = \nu_p(f_p),
\]
where $f_p$ is the local representation of $\omega$ in a neighborhood of $p$, $\omega= f_p (dz_p)^n$.
\end{definition}

\if 0
Let $\mathcal{M}_0(S)$ be the vector space of all meromorphic functions on $S$; $\mathcal{M}_1(S)$ be the vector space of all meromorphic 1-forms on $S$. Given a divisor $D$, we define a linear space over $\mathbb{C}$ of meromorphic functions as
\[
    L(D):= \{f\in \mathcal{M}_0(S) | (f) \ge -D \},
\]
its dimension is denoted as $l(D)$; another linear space over $\mathbb{C}$ of meromorphic differentials as
\[
    I(D):= \{\omega\in \mathcal{M}_1(S) | (\omega) \ge D \},
\]
its dimension is $i(D)$.

\begin{theorem}[Riemann-Roch] Suppose $D$ is divisor on a compact Riemann surface with genus $g$, then
\begin{equation}
    l(D)-i(D) = \text{deg}(D) + 1 - g.
    \label{eqn:Riemann_Roch}
\end{equation}
\label{thm:RR}
\end{theorem}
\fi

\begin{definition}[Principle Divisor]
The divisors of meromorphic functions are called principle divisors.
\end{definition}
All principle divisors are of degree zeros. Suppose $\omega$ is a meromorphic 1-form, then $deg((\omega))=2g-2$, where $g$ is the genus.

\begin{definition}[Equivalent Divisors] Two divisors are equivalent, if their difference is a principle divisor.
\end{definition}

\subsection{Abel-Jacobian Theorem}

\begin{figure}[h!]
\centering
\begin{tabular}{cc}
\includegraphics[width=0.6\textwidth]{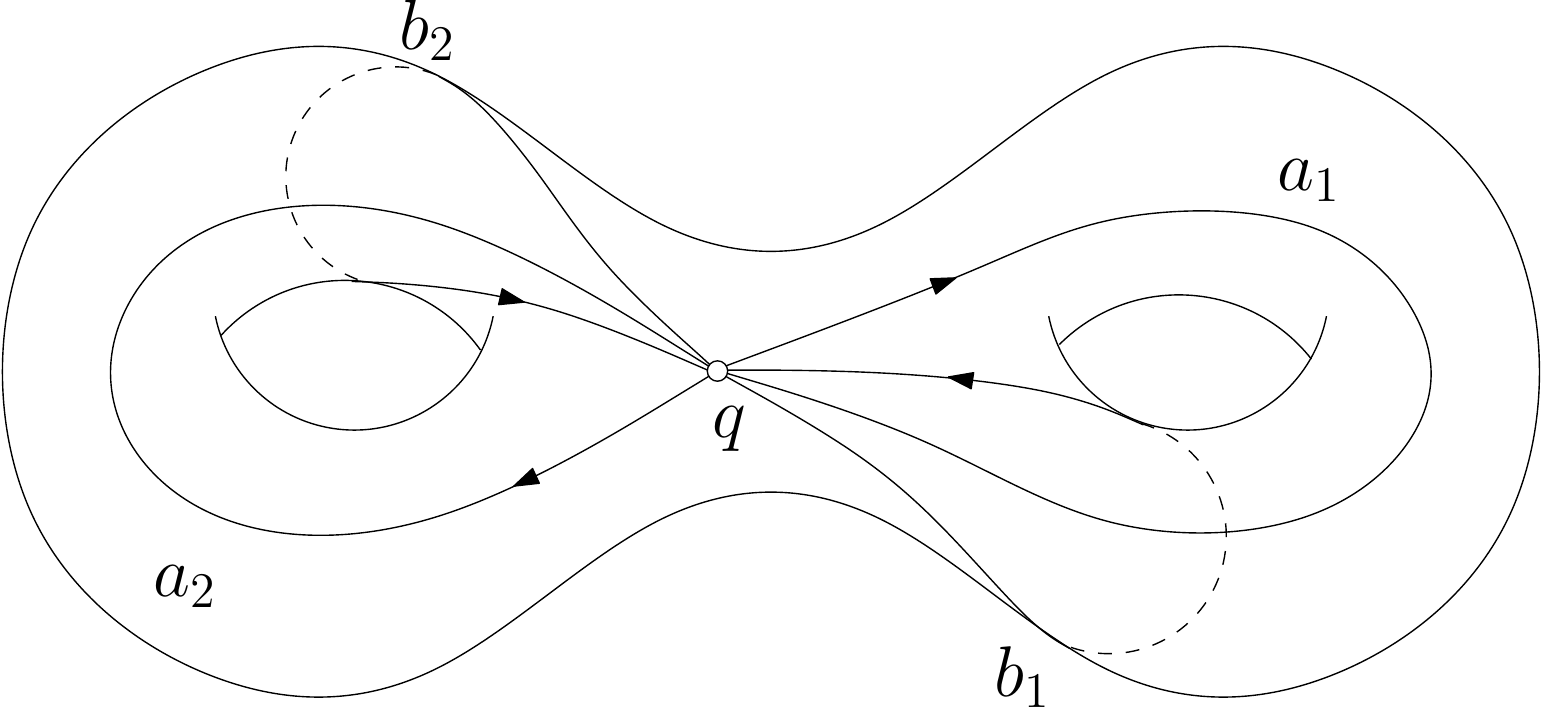}&
\includegraphics[width=0.32\textwidth]{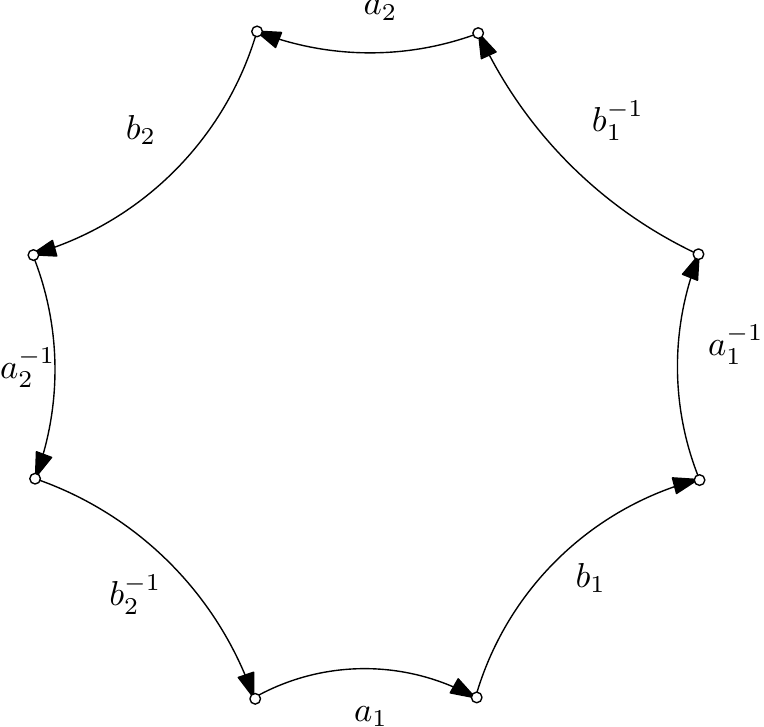}\\
a) canonical basis & b) fundamental domain\\
\end{tabular}
\caption{Canonical fundamental group basis.}
\label{fig:fundamental group basis}
\end{figure}

Suppose $\{a_1,b_1,\dots,a_g,b_g\}$ is  a set of canonical basis for the homology group $H_1(S,\mathbb{Z})$, each $a_i$ and $b_i$ represent the curves around the inner and outer circumferences of the $i$th handle. The surface is sliced along the homology group basis to obtain a fundamental domain,  as shown in Fig.~\ref{fig:fundamental group basis}.

Let $\{\omega_1,\omega_2,\dots, \omega_g\}$ be a normalized basis of $\Omega^1$, the linear space of all holomorphic 1-forms over $\mathbb{C}$. The choice of basis is dependent on the homology basis chosen above; the normalization signifies that
\[
    \int_{a_i} \omega_j = \delta_{ij},\quad i,j = 1,2,\dots,g.
\]
For each curve $\gamma$ in the homology group, we can associate a vector $\lambda_\gamma$ in $\mathbb{C}^g$ by integrating each of the $g$ 1-forms over $\gamma$,
\[
    \lambda_\gamma = \left(\int_\gamma \omega_1,\int_\gamma \omega_2,\dots, \int_\gamma \omega_g \right)
\]
The \emph{period matrix} of the Riemann surface $S$ is given by:
\[
    \left(\lambda_{a_1},\lambda_{a_2},\cdots,\lambda_{a_g};\lambda_{b_1}, \lambda_{b_2},\cdots, \lambda_{b_g}\right).
\]
We define a $2g$-real-dimensional lattice $\Lambda$ in $\mathbb{C}^g$,
\[
    \Gamma = \left\{  \sum_{i=1}^g \alpha_i~\lambda_{a_i} + \sum_{j=1}^g \beta_j~ \lambda_{b_j}, \quad \alpha_i, \beta_j\in \mathbb{Z} \right\}
\]
\begin{definition}[Jacobian Variety]
The Jacobian Variety of the Riemann surface $S$, denoted $J(S)$, is the compact quotient $\mathbb{C}^g/\Lambda$.
\end{definition}

\begin{definition}[Abel-Jacobi Map] Fix a base point $p_0\in S$. The Abel-Jacobi map is a map $\mu:S\to J(S)$. For every point $p\in S$, choose a curve $\gamma$ from $p_0$ to $p$ inside the fundamental domain; the Abel-Jacobi map $\mu$ is defined as follows:
\[
    \mu(p) = \left(\int_\gamma \omega_1, \int_\gamma \omega_2, \dots, \int_\gamma \omega_g \right) ~~\mod \Gamma,
\]
where the integrals are all along $\gamma$.
\end{definition}
It can be shown $\mu(p)$ is well-defined,  that the choice of curve $\gamma$ doesn't not affect the value of $\mu(p)$. Given a divisor $D=\sum_k n_k p_k$, the Abel-Jacobi map is defined as
\[
    \mu(D) = \sum_k n_k \mu(p_k).
\]
\begin{theorem}[Abel Theorem] Given a compact Riemann surface $S$ of genus $g>0$, and a degree zero divisor $D$, $D$ is a principle divisor if and only if
\begin{equation}
\mu(D)=0 \quad \text{in}\quad J(S).
\end{equation}
\label{thm:abel_theorem}
\end{theorem}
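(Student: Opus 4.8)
The plan is to prove the two implications separately. The forward implication (principal $\Rightarrow \mu(D)=0$) will follow almost formally from the fact that a holomorphic map from the Riemann sphere into a complex torus is constant, while the converse (which is the substantial direction) will be obtained by building an explicit meromorphic function as the exponential of an integrated third-kind differential, and checking single-valuedness via Riemann's bilinear relations.

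\textbf{Forward direction.} Suppose $D=(f)$ for a meromorphic $f$, which I view as a holomorphic branched cover $f:S\to\mathbb{C}\cup\{\infty\}$ of some degree $d$. For each $t\in\mathbb{C}\cup\{\infty\}$ let $f^{-1}(t)$ be the fiber divisor counted with multiplicity, an effective divisor of degree $d$ depending holomorphically on $t$ through the $d$-fold symmetric product $S^{(d)}$. I would consider the map $\Phi(t)=\mu\big(f^{-1}(t)\big)$ into $J(S)$. Since the Abel--Jacobi map descends to a holomorphic map on $S^{(d)}$ (ramification of $f$ causes no difficulty, as coincident points merely add equal summands), $\Phi$ is holomorphic. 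The Riemann sphere is simply connected, so $\Phi$ lifts to a holomorphic map $\tilde\Phi:\mathbb{C}\cup\{\infty\}\to\mathbb{C}^g$ through the covering $\mathbb{C}^g\to J(S)$; each component of $\tilde\Phi$ is a holomorphic function on a compact Riemann surface and is therefore constant. Hence $\Phi$ is constant, so $\mu\big(f^{-1}(0)\big)=\mu\big(f^{-1}(\infty)\big)$, and since $D=f^{-1}(0)-f^{-1}(\infty)$ this gives $\mu(D)=0$.

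\textbf{Converse direction.} Write $D=\sum_k n_k p_k$ with $\sum_k n_k=0$. Because the residues sum to zero, there exists a meromorphic $1$-form $\eta$ of the third kind having a simple pole of residue $n_k$ at each $p_k$ and no other singularity; I normalize it so that all $a$-periods vanish, $\int_{a_i}\eta=0$. The candidate function is $f=\exp\!\big(\int^{p}\eta'\big)$ for a form $\eta'$ differing from $\eta$ by a holomorphic correction to be chosen, and $f$ is single-valued on $S$ exactly when every period of $\eta'$ lies in $2\pi i\,\mathbb{Z}$. The key input is the reciprocity relation for third-kind differentials, which I would establish by integrating $\eta$ against the normalized basis $\omega_j$ over the boundary of the cut-open fundamental polygon and evaluating by residues:
\[
\int_{b_j}\eta \;=\; 2\pi i\sum_k n_k\int_{p_0}^{p_k}\omega_j \;=\; 2\pi i\,\big(\mu(D)\big)_j .
\]
The hypothesis $\mu(D)=0$ in $J(S)$ means the vector $\mu(D)\in\mathbb{C}^g$ lies in the period lattice, so $\mu(D)=\sum_i\alpha_i\lambda_{a_i}+\sum_j\beta_j\lambda_{b_j}$ with $\alpha_i,\beta_j\in\mathbb{Z}$. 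Using $\int_{a_i}\omega_j=\delta_{ij}$, this reads $(\mu(D))_k=\alpha_k+\sum_j\beta_j\int_{b_j}\omega_k$. I then set $\eta'=\eta-2\pi i\sum_j\beta_j\,\omega_j$, which has the same poles and residues as $\eta$. Its $a$-periods become $\int_{a_i}\eta'=-2\pi i\,\beta_i\in 2\pi i\,\mathbb{Z}$, and its $b$-periods become
\[
\int_{b_k}\eta' \;=\; 2\pi i\,\alpha_k+2\pi i\sum_j\beta_j\Big(\int_{b_j}\omega_k-\int_{b_k}\omega_j\Big)\;=\;2\pi i\,\alpha_k ,
\]
where the cancellation uses the symmetry $\int_{b_j}\omega_k=\int_{b_k}\omega_j$ of the period matrix. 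All periods of $\eta'$ now lie in $2\pi i\,\mathbb{Z}$, so $f=\exp\!\big(\int\eta'\big)$ is a well-defined meromorphic function, and by construction $(f)=D$; hence $D$ is principal.

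\textbf{Main obstacle.} The hard part is the converse, and within it the two bilinear relations that make the bookkeeping work: the reciprocity formula $\int_{b_j}\eta=2\pi i(\mu(D))_j$ and the symmetry of the period matrix. Both require a careful contour integration over the fundamental polygon together with control of the multivaluedness of $\int\eta$ across the homology cuts. The delicate point is that $\mu(D)\in\Gamma$ does not by itself force each coordinate $(\mu(D))_k$ to be an integer; the holomorphic correction $-2\pi i\sum_j\beta_j\omega_j$ together with the symmetry of the periods is precisely what converts membership in the lattice into genuine integrality of all periods of $\eta'$. By contrast, the forward direction is essentially automatic once the holomorphic-map-into-a-torus principle is invoked.
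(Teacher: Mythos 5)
The paper never proves this statement: Abel's theorem appears in the ``Theoretic Background'' section as a quoted classical result (it is only \emph{used} later, in Theorem~\ref{thm:Abel_Jacobian_condition} and Theorem~\ref{thm:inverse}), so there is no in-paper proof to compare against. Your argument is the standard classical proof, and it is correct. The forward direction---holomorphicity of $t\mapsto\mu\bigl(f^{-1}(t)\bigr)$ on $\mathbb{C}\cup\{\infty\}$, lifting through $\mathbb{C}^g\to J(S)$, and constancy of holomorphic functions on a compact Riemann surface---is sound; the one point you gloss is why the fiber map $t\mapsto f^{-1}(t)\in S^{(d)}$ is holomorphic \emph{at branch points} of $f$, which is standardly checked by noting that the elementary symmetric functions of the Puiseux branches of the preimages are holomorphic in $t$ (these are exactly the local coordinates on the symmetric product). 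The converse is the textbook construction: a third-kind differential $\eta$ with residues $n_k$ and vanishing $a$-periods, the reciprocity formula $\int_{b_j}\eta=2\pi i\,(\mu(D))_j$, and the correction $\eta'=\eta-2\pi i\sum_j\beta_j\omega_j$, whose $b$-periods become integral multiples of $2\pi i$ precisely because of the symmetry $\int_{b_j}\omega_k=\int_{b_k}\omega_j$ of the normalized period matrix; your bookkeeping here is right, and you correctly identify that lattice membership of $\mu(D)$ does not mean coordinatewise integrality, which is exactly what the holomorphic correction fixes. Two details worth making explicit: single-valuedness of $f=\exp\bigl(\int\eta'\bigr)$ requires integrality (after division by $2\pi i$) of \emph{all} periods of $\eta'$ on $S\setminus\{p_k\}$, including those of small loops around the poles---these equal $2\pi i\,n_k$ and are fine since the $n_k$ are integers---and the reciprocity formula presupposes that the points $p_k$ and the integration paths lie inside the cut-open fundamental polygon, which is consistent with the paper's definition of $\mu$. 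With those caveats filled in, your proof is complete and matches the classical treatments (Griffiths--Harris, Farkas--Kra); it is also self-contained in a way the paper does not attempt to be.
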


%%%%%%%%%%%%%%%%%%%%%%%%%%%%%%%%%%%%%%%%%%%%%%%%%%%%%%%%%%%%%%%%%%%%%%%%%%%%%%%%%%%%%%%%%%%%%%%%%%%%%%%%%%%%%%%%%%%%%%%%%%%%%%%%%%%%%%%%%%%%%%%%%
\if 0

\paragraph{Branched Covering}

Let $h:X\to Z $ be a continuous map with the following properties.
\paragraph{Complete Unfolding}

Given a quad-mesh $(S,\mathcal{Q})$ with a fixed facet $\sigma_0$. Let $\Sigma(\mathcal{Q})$ is the union of the disjoint faces of $\mathcal{Q}$. Define the product $\overline{\mathcal{Q}} = \Sigma(\mathcal{Q})\times \Pi(\mathcal{Q},\sigma_0)$. Each pair $(\sigma,g)$ is a copy of the geometric square $\sigma$. The natural affine isomorphism $(\sigma,g)\to \sigma$ induces the projection $\overline{\mathcal{Q}}\to \mathcal{Q}$. The squares $(\sigma,g)$ can be glued as follows: for each face $\sigma$ of $\mathcal{Q}$ choose some facet path $\gamma_\sigma$ from $\sigma_0$ to $\sigma$. Suppose that $\rho$ is a common edge in $\mathcal{Q}$ of the facet $\sigma$ and $\tau$, $\sigma\cap \tau = \rho$. Then we glue $(\sigma,g)$ and $(\tau,h)$ with respect to the affine map induced by the identity map on $\rho$ if the equation
\begin{equation}
    gh^{-1} = \langle \gamma_\sigma \gamma_\tau^{-1} \rangle.
\end{equation}
Let $\sim$ be the equivalence relation generated by this gluing strategy. The resulting  complex
\[
    \tilde{\mathcal{Q}} = \overline{\mathcal{Q}}/\sim
\]
is called the \emph{complete unfolding} of $\mathcal{Q}$. The topological surface underlying $\tilde{\mathcal{Q}}$ is denoted as $\tilde{S}$. The \emph{unfolding map} $p:\tilde{\mathcal{Q}}\to \mathcal{Q}$ is defined by $(\sigma,g)\to \sigma$.

\begin{theorem} Given a quadrilateral mesh $(S,\mathcal{Q})$, the holonomy group of its complete unfolding $(\tilde{S},\tilde{\mathcal{Q}})$ is trivial.
\end{theorem}

\begin{proof}
Assume $\tilde{\gamma}$ is an arbitrary face loop on $\tilde{\mathcal{Q}}$,   $\tilde{\gamma} = \{\tilde{\sigma}_0, \tilde{\sigma}_1, \cdots , \tilde{\sigma}_{n}\}$,  where $\tilde{\sigma}_0=\tilde{\sigma}_n$. $\tilde{\sigma}_k = (\sigma_k,g_k)$, where $g_k \in \Pi(\mathcal{Q},\sigma_0)$, each $\sigma_k$ is associated with a path $\gamma_k \subset S$.
Consider two adjacent faces $\tilde{\sigma}_k$ and $\tilde{\sigma}_{k+1}$, then by construction of the complete unfolding, we have $\langle \gamma_k \gamma_{k+1}^{-1} \rangle = g_k g_{k+1}^{-1}$. Because $p:\tilde{\mathcal{Q}} \to \mathcal{Q}$ is a covering, the quad-mesh metric $\mathbf{g}$ on $\mathcal{Q}$ is pulled back to $\tilde{\mathcal{Q}}$ by the projection map. The holonomy can be computed as
\[
\langle \tilde{\gamma} \rangle = \Pi_{k=0}^{n-1} \langle \gamma_k \gamma_{k+1}^{-1} \rangle = \Pi_{k=0}^{n-1} g_kg_{k+1}^{-1} = 1.
\]
Because $\tilde{\gamma}$ is arbitrarily chosen, therefore the holonomy group of the complete unfolding is trivial.
\end{proof}

\subsection{Analytic Structure}

\begin{definition}[Riemann Surface]
Suppose $S$ is a topological surface, $\mathcal{A}$ is an atlas of $S$ with complex local coordinates. If all the transition functions are biholomorphic, then $\mathcal{A}$ is called a conformal atlas. The surface with a conformal atlas is called a Riemann surface.
\end{definition}

\begin{definition}[Holomorphic 1-form] Suppose $S$ is a Riemann surface with conformal atlas $\{(U_i,z_i)\}$. Suppose $\omega$ is a complex differential, on the local chart $(U_i,z_i)$, it has local representation $\omega = \varphi_i(z_i) dz_i$. On another local chart $(U_j,z_j)$, $\omega=\varphi_j(z_j)dz_j$, such that
\[
    \varphi_i(z_i)  = \varphi_j( z_j(z_i) ) \frac{\partial z_j}{\partial z_i}.
\]
If all $\varphi_i$'s are holomorphic function, then $\omega$ is called a holomorphic 1-form. If $\varphi_i(p)=0$, then $p$ is called the zero of $\omega$.
\end{definition}
All the holomorphic 1-forms on a closed Riemann surface form a linear space, the complex dimension is $g$, where $g$ is the genus of the surface.
\begin{definition}[Horizontal/vertical Trajectories] Given a Riemann surface $S$ and a holomorphic 1-form $\omega$, for each point $p\in S$, suppose a tangent vector $\mathbf{v}\in T_pS$. If $\omega(\mathbf{v})$ is real (imaginary), then $\mathbf{v}$ is called a horizontal (vertical) direction. Suppose $\gamma$ is a curve on $S$, if all the tangent vectors to $\gamma$ are along horizontal (vertical) directions, then $\gamma$ is called a horizontal (vertical) trajectory.
\end{definition}

Similarly, we can define holomorphic quadratic differential, with local representation $\varphi_i(z_i)dz_i^2$; a holomorphic quartic differential has local representation $\varphi_i(z_i)dz_i^4$, where $\varphi_i$ is holomorphic. The zeros and horizontal/vertical trajectories are defined similarly. The quad-mesh has intrinsic relation with holomorphic differentials on Riemann surfaces.

\fi

%%%%%%%%%%%%%%%%%%%%%%%%%%%%%%%%%%%%%%%%%%%%%%%%%%%%%%%%%%%%%%%%%%%%%%%%%%%%%%%%%%%%%%%%%%%%%%%%%%%%%%%%%%%%%%%%%%%%%%%%%%%%%%%%%%%%%%%%%%%%%%%%%

\if 0
\subsection{Riemannian Metric Structure}

%\begin{definition}[Riemann Surface]
%Suppose $S$ is a topological surface, $\mathcal{A}$ is an atlas of $S$ with complex local coordinates. If all the transition functions are biholomorphic, then $\mathcal{A}$ is called a conformal atlas. The surface with a conformal atlas is called a Riemann surface.
%\end{definition}

The quad-metric has many special properties, which are summarized as the following theorem.

\begin{theorem}[Quad-mesh metric]
\label{thm:quad_mesh_metric}
If a Riemannian metric $\mathbf{g}$ with cone singularities is induced by a quad-mesh $(S,\mathcal{Q})$, then it has the following properties:
\begin{enumerate}
    \item The metric $\mathbf{g}$ is flat except at the singularities. The total curvature measures at the singularities equals to $2\pi$ multiply the Euler characteristic number of the surface. This is the \emph{Gauss-Bonnet condition}.
    \item In each quad-face, we can assign a cross (two orthogonal line segments, parallel to edges), then we get a global smooth \emph{cross field}. Namely, the holonomy group is a subgroup of $\mathcal{R}$. This is equivalent to the  \emph{holonomy condition}.
     \item If the surface has boundaries, then the cross field is aligned with the boundaries, namely the boundaries are either parallel or orthogonal to the axes of the crosses. This is called \emph{boundary alignment condition}.
    \item By connecting the horizontal or vertical edges of the quad-faces, geodesic loops can be obtained. If the quad-mesh is subdivided infinite many times, a geodesic lamination is obtained, whose leaves are closed loops. This is called the \emph{finite geodesic lamination condition}.
\end{enumerate}
\label{thm:main}
\end{theorem}

\begin{proof}
\noindent{\emph{Gauss-Bonnet condition}} Each {\color{red} regular vertex} has $0$ curvature. Each singular vertex has curvature measure $Ind(v_i) \frac{\pi}{2}$, the total Gaussian curvature satisfies the Gauss-Bonnet theorem:
\[
    \sum_{v_i} Ind(v_i) \frac{\pi}{2} = 2\pi \chi(S).
\]

\noindent{\emph{Holonomy Condition}} The holonomy group of the quad-mesh is a subgroup of $\mathcal{R}$. A cross is invariant under the $\mathcal{R}$ action. We can put a cross at the base face $\sigma_0$, whose two axes are aligned with the edges of the square, and parallel transport to all the faces. This gives a global smooth cross field.

\noindent{\emph{Boundary Alignment}} All the boundaries of the quad-mesh consists of the edges of square faces, therefore the cross axes are parallel or orthogonal to the boundaries.

\noindent{\emph{Finite Geodesic lamination}} We start from the center of a face, issue a geodesic parallel with the edges of the face. The geodesic won't enter the same face more than two times. The number of faces is finite, therefore, the geodesic is of finite length.
This holds for all the geodesics constructed this way.
\end{proof}

\begin{theorem}[Inverse Quad-mesh metric Theorem]
\label{thm:quad_mesh_metric_inverse}
Given a topological surface $S$ and a flat metric $\mathbf{g}$ with cone singularities $\gamma$, $\mathbf{g}$ has the following properties:
\begin{enumerate}
    \item The metric $\mathbf{g}$ is flat except at the singularities. The total curvature measures at the singularities equals to $2\pi$ multiply the Euler characteristic number of the surface. This is the \emph{Gauss-Bonnet condition}.
    \item The holonomy group is a subgroup of $\mathcal{R}$. This is the  \emph{holonomy condition}.
     \item There is a cross field obtained by parallel transporting a cross defined at one {\color{red} regular vertex} of $S$, such that the cross field is aligned with the boundaries. This is the \emph{boundary alignment condition}.
    \item The stream lines parallel to the cross field are finite geodesic loops. This is the \emph{finite geodesic lamination condition}.
\end{enumerate}
Then a quadrilateral mesh can be constructed on $S$, such that the quad-mesh metric is $\mathbf{g}$.
\label{thm:main_inverse}
\end{theorem}
\begin{proof}
if we have a metric $\mathbf{g}$ satisfies the above conditions, then the geodesics aligned with the cross field give the quad-mesh $\mathcal{Q}$. The geodesics through the singularities are the separatrices.
\end{proof}

\begin{figure}[h!]
\centering
\begin{tabular}{cc}
\includegraphics[width=0.85\textwidth]{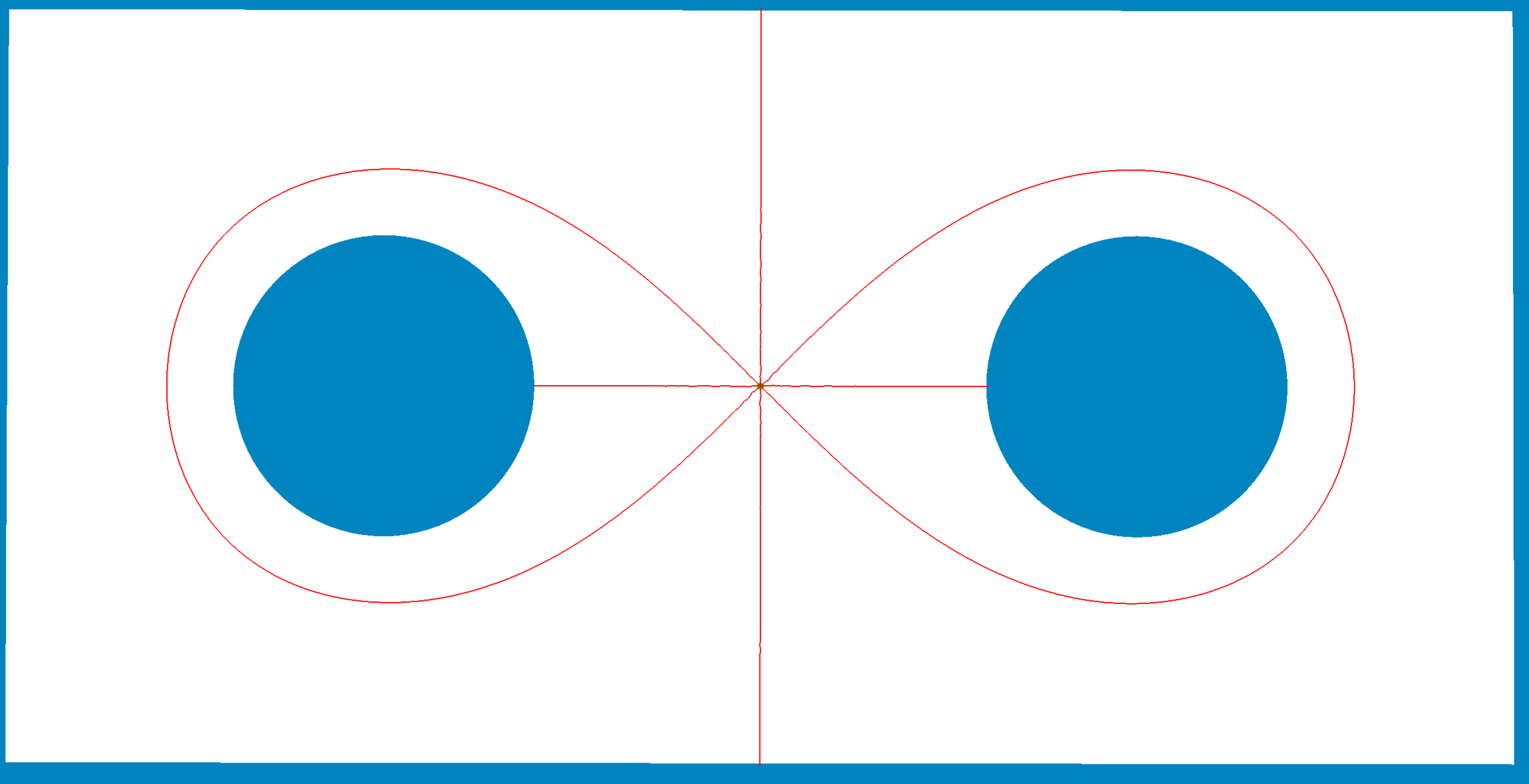}
\end{tabular}
\caption{Gauss-Bonnet condition: a planar domain with two inner boundaries. The center singularity is with index $-4$, everywhere else the curvature is $0$. The geodesic curvature along the boundaries are also $0$. The geodesics through the singularity are drawn as red curves.}
\label{fig:1saddle}
\end{figure}

As shown in Fig.~\ref{fig:1saddle}, given a planar rectangle with two circular holes, a special flat metric is computed with a single singularity, whose index is $-4$. The curvautre is $0$ every where else, including the boundaries. Therefore, the total curvature is $-2\pi$, the Euler characteristic number is $-1$, the Gauss-Bonnet formula holds. The red curves are geodesics through the singularity, they are perpendicular to the boundaries, or form geodesic loops. They are either parallel or orthogonal to each other.

\begin{figure}[h!]
\centering
\begin{tabular}{cc}
\includegraphics[width=0.85\textwidth]{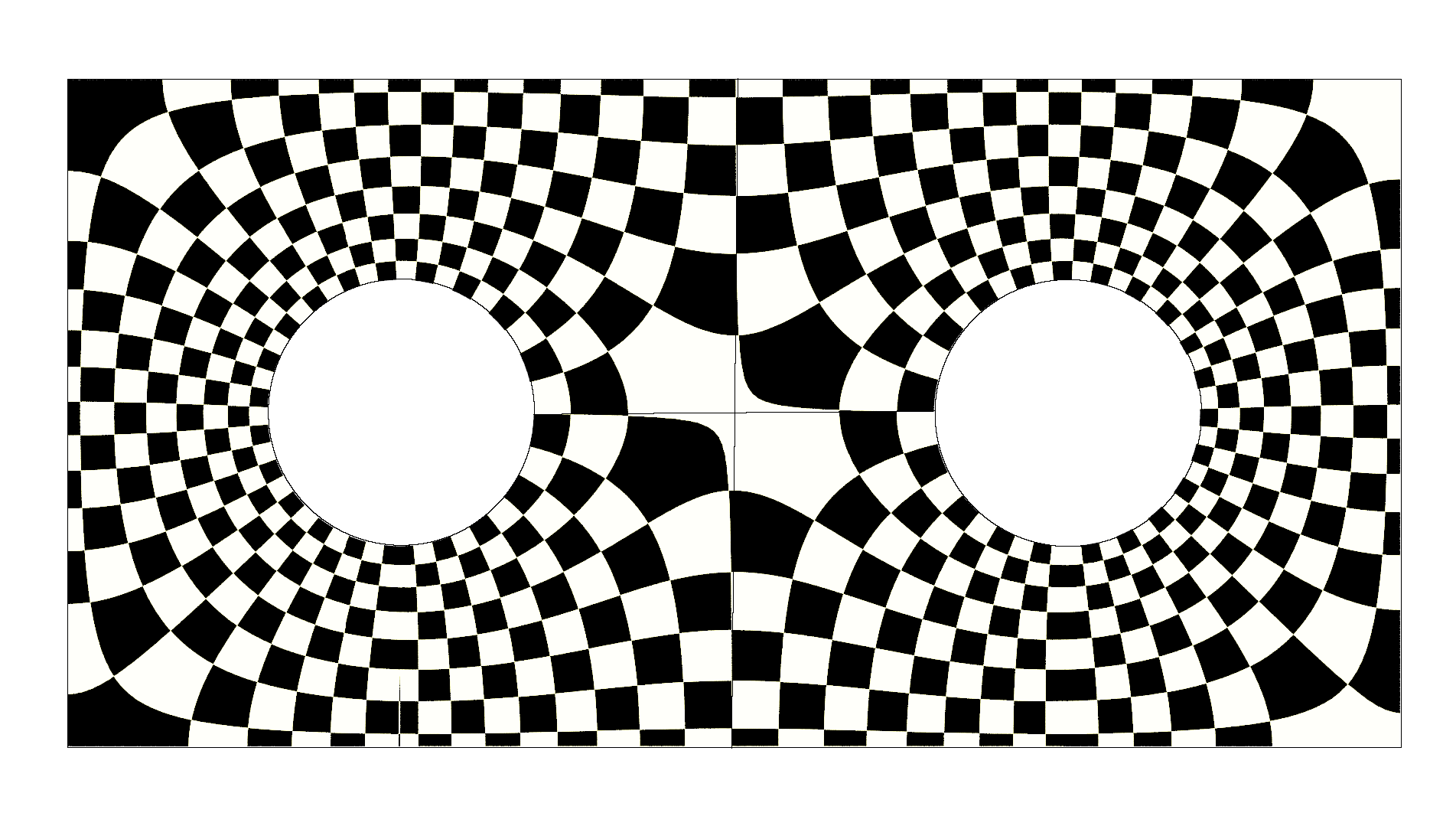}\\
\end{tabular}
\caption{Holonomy condition: the holonomy group of the quad-mesh is trivial.}
\label{fig:1saddle_qud_mesh}
\end{figure}

Fig.~\ref{fig:1saddle_qud_mesh} shows the holonomy condition. The quad-mesh is depicted by checker-board texture mapping. Each checker represents a quadrilateral face. The parallel transportation along two inner boundaries induces trivial holonomy. Similarly, the holonomy of the loop surrounding the singularity is also trivial.

\begin{figure}[h!]
\centering
\begin{tabular}{cc}
\includegraphics[width=0.85\textwidth]{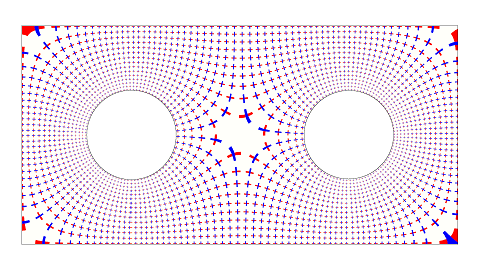}
\end{tabular}
\caption{Boundary alignment condition: the cross field is aligned with all the boundaries.}
\label{fig:1saddle_cross_field}
\end{figure}
Fig.~\ref{fig:1saddle_cross_field} shows the boundary alignment condition. We put a cross in each face, whose axis is aligned with the edges, then we obtain a smooth cross field. The cross field is aligned with all the boundaries.

\begin{figure}[h!]
\centering
\begin{tabular}{cc}
\includegraphics[width=0.85\textwidth]{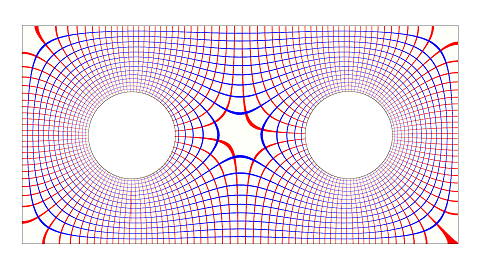}
\end{tabular}
\caption{Finite geodesic condition: all the geodesics aligned with the cross field are finite.}
\label{fig:1saddle_geodesic}
\end{figure}

Fig.~\ref{fig:1saddle_geodesic} shows the finite geodesic condition. All the geodesics parallel to the edges of the faces either terminate at the boundaries or the singularity, or form loops.

\begin{figure}[h!]
\centering
\begin{tabular}{cc}
\includegraphics[width=0.98\textwidth]{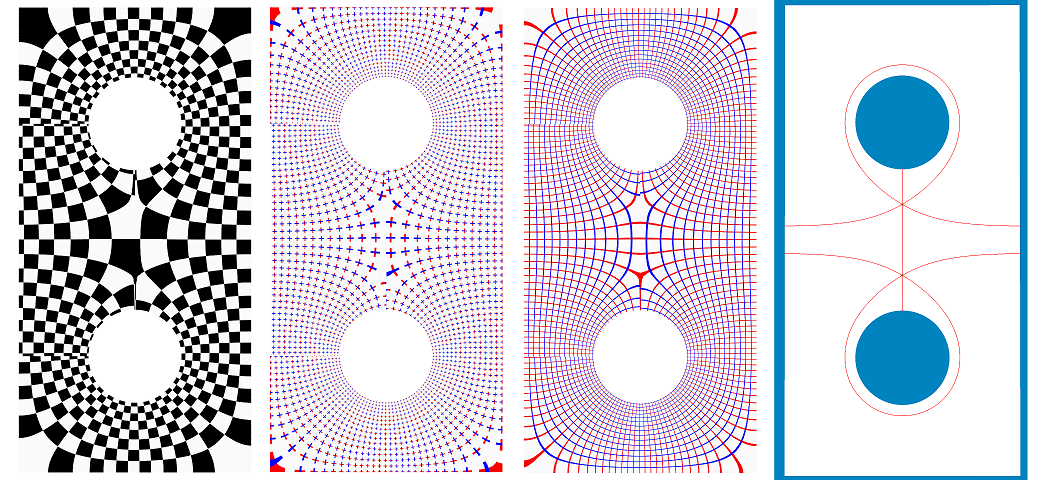}
\end{tabular}
\caption{Two singularity configuration, corresponding to a holomorphic quadratic form.}
\label{fig:2saddle}
\end{figure}

Fig.~\ref{fig:2saddle} illustrates the same surface with $2$ valence-6 singularities, each has $-\pi$ Gaussian curvature measure. From left to right, the quad-mesh, the cross field, the geodesics, the singularities and the geodesics through them and perpendicular to the boundaries. The flat metric with the cone singularities satisfies all the 4 conditions.

\begin{figure}[h!]
\centering
\begin{tabular}{cc}
\includegraphics[width=0.98\textwidth]{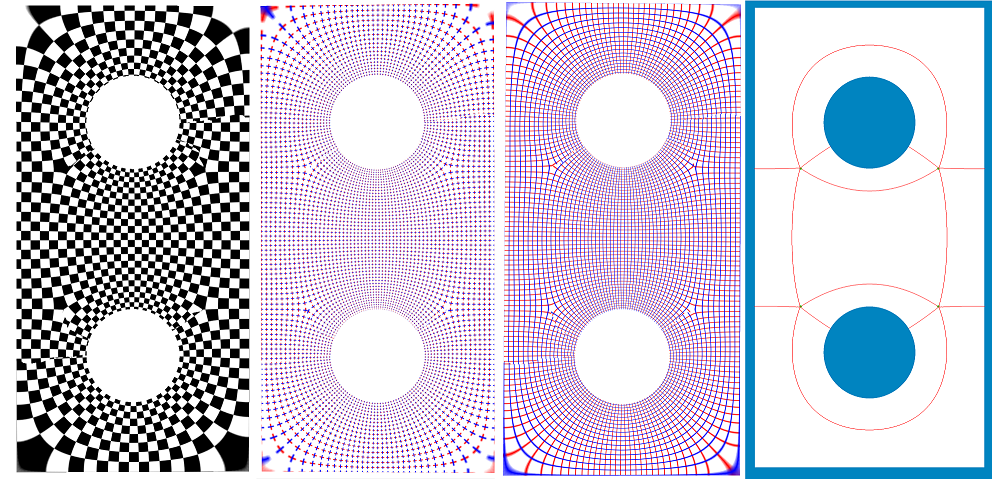}
\end{tabular}
\caption{Four singularity configuration, corresponding to a holomorphic quartic form.}
\label{fig:4saddle}
\end{figure}

Fig.~\ref{fig:4saddle} shows the same surface with $4$ valence-5 singularities, each has $-\frac{\pi}{2}$ Gaussian curvature measure. The flat metric with cone singularities satisfies all the 4 conditions.

\if 0
\begin{proposition}
Suppose $S$ is a genus zero surface with boundaries, $\mathbf{g}$ is a flat metric with cone singularites $\Gamma$. Each singularity has Gaussian curvature measure $\frac{k}{2}\pi$, $k\in\mathbb{Z}$. The total curvature of each boundary component is $2k\pi$, $k\in \mathbb{Z}$. Then the holonomy group induced by $\mathbf{g}$ is a subgroup of $\mathcal{R}$.
\end{proposition}
This means for genus zero surfaces, the Gauss-Bonnet condition and boundary curvature condition imply the holonomy condition.

\begin{proof}
Assume the boundary components of the surface $S$ are
\[
    \partial S = \gamma_0 - \gamma_1 \cdots - \gamma_n,
\]
singularity set is $\Gamma=\{v_1,v_2,\cdots,v_m\}$, each $v_i$ has Guassian curvature $K_i=k_i\frac{\pi}{2}$.

Because the surface is of genus zero, the surface has no handles. Fix a base face $\sigma_0$, the fundamental group $\pi_1(S-\Gamma,\sigma_0)$ has generators $\gamma_1,\gamma_2,\dots,\gamma_n$, and $\tau_1,\tau_2,\dots,\tau_m$, where $\tau_k$ is a loop surrounding the singularity $v_k$. The holonomy of each generator, $\langle \gamma_i \rangle$ or $\langle \tau_j \rangle$, is in $\mathcal{R}$.

\end{proof}
\fi

\begin{figure}[h!]
\centering
\begin{tabular}{cc}
\includegraphics[width=0.98\textwidth]{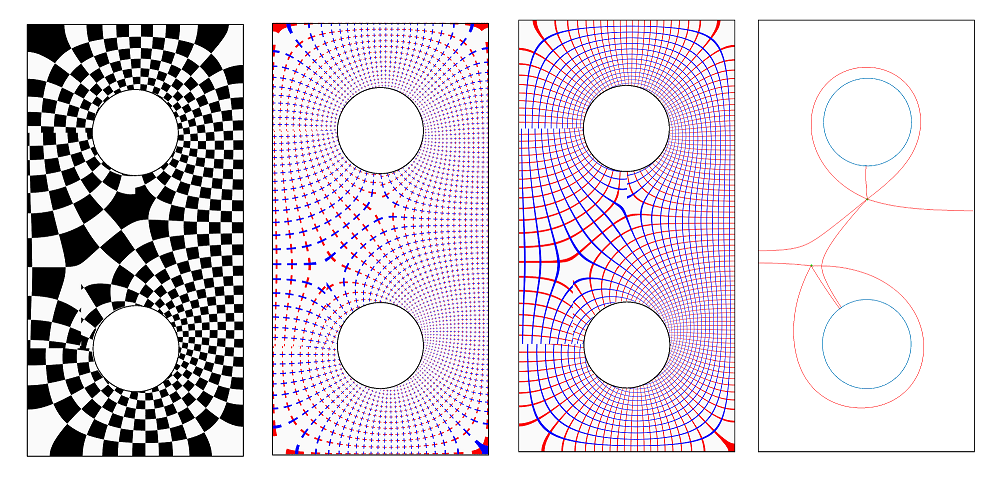}
\end{tabular}
\caption{Two singularity configuration, violating the boundary alignment condition.}
\label{fig:2saddle_twisted}
\end{figure}

Fig.~\ref{fig:2saddle_twisted} shows the same surface with different positions of singularities, the flat metric $\mathbf{g}$ satisfies the Gauss-Bonnet condition. The global smooth cross field in the 2nd frame shows the metric satisfies the holonomy condition. But the cross fields are not aligned with the inner boundaries, hence the geodesics are not parallel or orthogonal to the inner boundaries as shown in the 3rd frame.

\begin{figure}[h!]
\centering
\begin{tabular}{cc}
\includegraphics[height=0.45\textwidth]{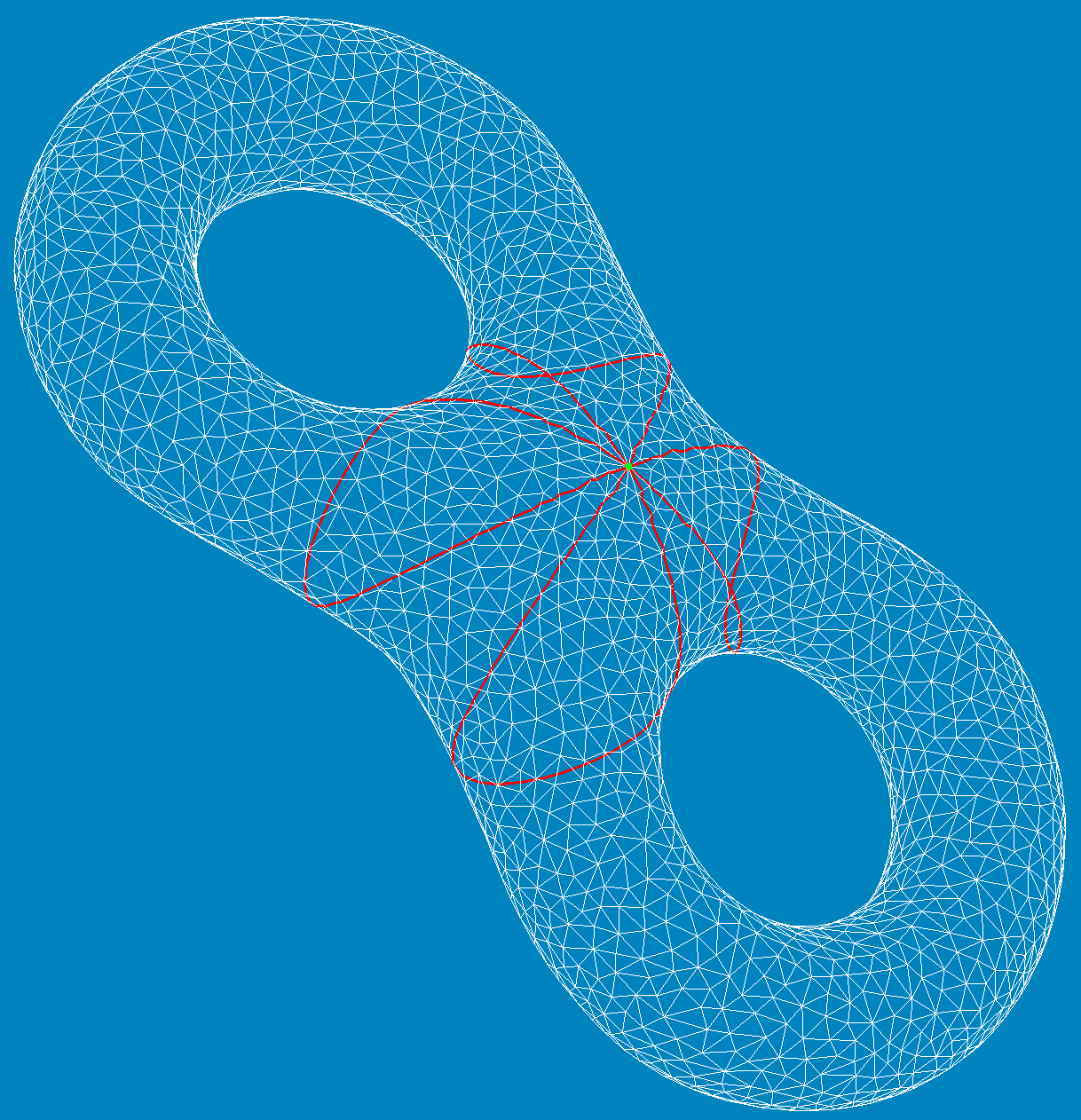}&
\includegraphics[height=0.45\textwidth]{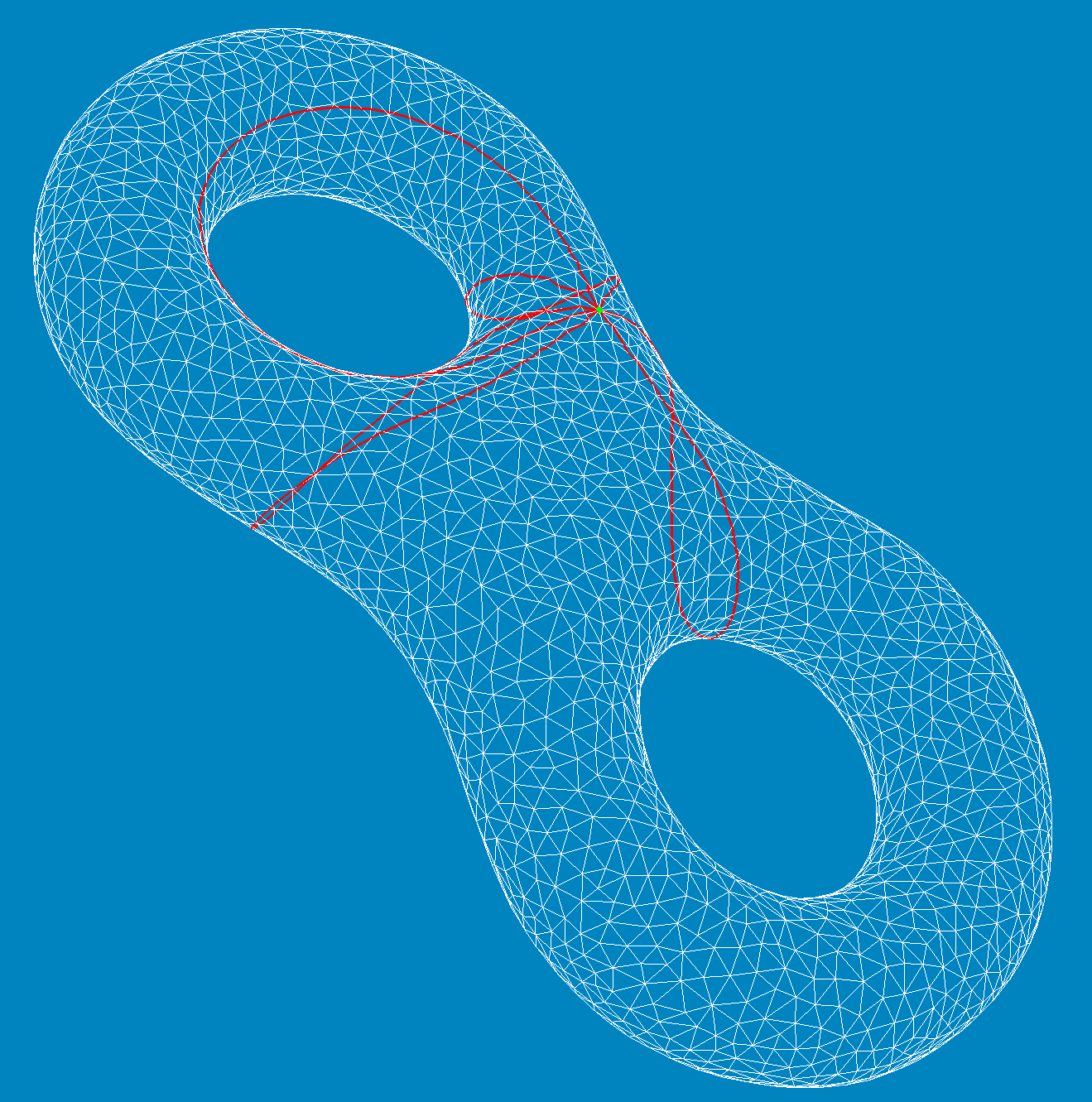}
\end{tabular}
\caption{A flat metric with a cone singularity, whose curvature measure is $-4\pi$. The metric violates the holonomy condition.}
\label{fig:eight_holonomy}
\end{figure}

\begin{lemma} Suppose $S$ is a genus zero surface, $\mathbf{g}$ is a flat metric with internal cone singularities $\Gamma_1=\{p_1,p_2,\cdots,p_n\}$; along the boundary components there are boundary singularities
$\Gamma_2=\{q_1,q_2,\cdots,q_m\}$,
where the discrete Gaussian curvature measure of $p_i$ is $\frac{k_i}{2}\pi$, where $k_i\in \mathbb{Z}$ is an integer, the curvature measure of $q_j$ is $\frac{l_j}{2}\pi$, $l_j\in \mathbb{Z}$.
Furthermore $\mathbf{g}$ satisfies the Gauss-Bonnet condition,
\[
    \sum_{i=1}^{n} k_i + \sum_{j=1}^{m} l_j = 4\chi(S-\Gamma),
\]
then $\mathbf{g}$ also satisfies the holonomy condition.
\end{lemma}
\begin{proof}
Suppose the boundary components of the surface
$S$ are
{\color{red}
\[
    \partial S = s_0 - s_1 \cdots s_t,
\]
}
and the singularity set is $\Gamma=\{p_1,p_2,\cdots,p_n\}$. Let $\alpha_j$ is the loop around $p_j$, then the fundamental group of the surface is given by
{\color{red}
\[
    \pi_1(S-\Gamma_1) = \langle s_1, s_2, \cdots, s_t, \alpha_1,\cdots, \alpha_n\rangle.
\]
}
The holonomies of the generators
\[
    \langle \alpha_i \rangle = \frac{k_i}{2}\pi,
\]
the holonomy of each boundary component {\color{red}$s_j$} equals to the total Gaussian curvature measures of all corner singularities along {\color{red}$s_j$}. Hence, the metric satisifies the holonomy condition.
\end{proof}

But if the surface is of high genus, then a flat metric $\mathbf{g}$ satisfying Gauss-Bonnet may not satisfy the holonomy condition. Fig.~\ref{fig:eight_holonomy} shows a metric satisfies the Gauss-Bonnet condition, but violates the holonomy condition. We use discrte Ricci flow method to compute a flat metric $\mathbf{g}$ on a genus two surface, such that the unique singularity $p$ is with $-4\pi$ Gaussian curvature. According to the discrete uniformization theorem proved in \cite{Gu_JDG_2018}, such kind of metric exists and is unique upto scaling. We calculate several geodesic loops through the singularity $p$. Suppose $\gamma$ is a geodesic loop, $\gamma(0)=\gamma(1)=p$. If the metric $\mathbf{g}$ satisfies the holonomy condition, then the angle between two tangent vectors $\gamma'(0)$ and $\gamma'(1)$ is $\frac{k}{2}\pi$ under $\mathbf{g}$, where $k$ is an integer. We measure such angles of several geodesic loops through $p$, most of them are not $\frac{k}{2}\pi$. Hence the metric $\mathbf{g}$ doesn't satisfy the holonomy condition.
\fi

\section{Quad-Meshes and Meromorphic Quartic Forms}
\label{sec:quad_differential}

In this section, we show the intrinsic relation between a quad-mesh and a meromorphic quartic differential, this gives the Abel-Jacobi condition for the singularity configuration of any quad-mesh.

\subsection{Quadrilateral Mesh}

\begin{definition}[Quadrilateral Mesh] Suppose $\Sigma$ is a topological surface, $\mathcal{Q}$ is a cell partition of $\Sigma$, if all cells of $\mathcal{Q}$ are topological quadrilaterals, then we say $(\Sigma,\mathcal{Q})$ is a quadrilateral mesh.
\end{definition}

%\begin{figure}[h!]
%\begin{tabular}{cc}
%%\includegraphics[width=0.45\textwidth]{figures/Selection_545.png}&
%\includegraphics[width=0.45\textwidth]{figures/Selection_546.png}}
%\end{tabular}
%\caption{Singularities on a quad-mesh and a hex-mesh. \label{fig:singularities}}
%\end{figure}

On a quad-mesh, the \emph{topological valence} of a vertex is the number of faces adjacent to the vertex.
\begin{definition}[Singularity] Suppose $(S,\mathcal{Q})$ is a closed quadrilateral mesh. For each vertex $v$ with topological valence $k$. If $k$ is $4$, then we call $v$ a \emph{regular vertex}, otherwise a \emph{$k$-singular vertex}.
\end{definition}

\begin{definition}[Quad-mesh Metric] Suppose $(\Sigma,\mathcal{Q})$ is a closed quadrilateral mesh. Each face is treated as a unit planar square, then $\mathcal{Q}$ induces a flat Riemannian metric $\mathbf{g}_Q$ with cone singularities at singularities.
\end{definition}

\begin{definition}[Discrete Curvature] Suppose $(\Sigma,\mathcal{Q})$ is a closed quadrilateral mesh. Under the metric $\mathbf{g}_Q$, the discrete curvature at a valence $k$ vertex is
\[
    \frac{\pi}{2} (4-k).
\]
\end{definition}

\begin{theorem}[Gauss-Bonnet]Suppose $(\Sigma,\mathcal{Q})$ is a closed quadrilateral mesh of genus $g$. Under the metric $\mathbf{g}_Q$, the total curvature is
\begin{equation}
    \sum_{v\in Q}  \frac{\pi}{2} (4-k(v))= 2\pi \chi(\Sigma).
    \label{eqn:gauss_bonnet}
\end{equation}
\end{theorem}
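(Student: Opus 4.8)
The plan is to reduce the statement to pure combinatorics, using the incidence-counting relations already recorded in the combinatorial structure paragraph together with Euler's formula; no genuine geometry is needed once the discrete curvature is pinned to the valence. First I would expand the left-hand side by linearity,
\[
    \sum_{v\in Q}\frac{\pi}{2}\bigl(4-k(v)\bigr)
    = \frac{\pi}{2}\Bigl(4V-\sum_{v\in Q}k(v)\Bigr),
\]
where $V$ is the number of vertices. The key observation is that $\sum_{v}k(v)$ is exactly the number of incident (vertex, face) pairs: each face contributes one unit to the valence of each of its corner vertices. Since every cell of $\mathcal{Q}$ is a topological quadrilateral, every face has exactly four corners, so counting incidences face-by-face gives $\sum_{v}k(v)=4F$, where $F$ is the number of faces. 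Substituting yields $\frac{\pi}{2}(4V-4F)=2\pi(V-F)$.

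Next I would bring in the edge count. Each quadrilateral face bounds four edges, and on a closed surface every edge is shared by exactly two faces, so the double count gives $2E=4F$, i.e.\ $E=2F$; this is the relation $E=2F$ already stated for the combinatorial structure. Finally, Euler's formula $V-E+F=\chi(\Sigma)$ lets me eliminate $E$: writing $\chi(\Sigma)=V-E+F=V-2F+F=V-F$, I conclude that $V-F=\chi(\Sigma)$, and therefore $2\pi(V-F)=2\pi\chi(\Sigma)$, which is the desired identity in Eqn.~(\ref{eqn:gauss_bonnet}).

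Since this argument is essentially a chain of counting identities, there is no serious analytic obstacle; the only points demanding care are bookkeeping ones. First I must make sure the incidence sum $\sum_v k(v)=4F$ is valid, which relies on each cell being a genuine quadrilateral (so exactly four corners per face) and on the valence being defined as the number of adjacent faces, matching the \emph{topological valence} in the Definition of Singularity. Second, the relations $\sum_v k(v)=4F$ and $E=2F$ both use that $\Sigma$ is closed, so there are no boundary edges or boundary corners to correct for; if boundaries were present these counts would acquire boundary terms and the clean identity would fail. Granting the closedness hypothesis, the three relations combine with no residual terms, and the proof is complete.
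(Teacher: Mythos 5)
Your proof is correct, and it is worth noting that the paper itself never proves this theorem: it is stated bare, the intended justification being the general Gauss--Bonnet theorem for flat metrics with conical singularities (under $\mathbf{g}_Q$ every face is a unit Euclidean square, curvature concentrates at vertices as the angle deficit $2\pi - k(v)\frac{\pi}{2} = \frac{\pi}{2}(4-k(v))$, and the cone-metric Gauss--Bonnet theorem asserts these deficits sum to $2\pi\chi(\Sigma)$). Your route is genuinely different: you bypass the geometry entirely and derive the identity combinatorially from the incidence counts $\sum_v k(v) = 4F$ and $4F = 2E$ together with Euler's formula $V-E+F=\chi(\Sigma)$, which are precisely the relations the paper records in its ``Combinatorial structure'' paragraph but never assembles into a proof. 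The chain $\frac{\pi}{2}\bigl(4V-4F\bigr) = 2\pi(V-F) = 2\pi(V-E+F) = 2\pi\chi(\Sigma)$ is sound, and your bookkeeping caveats (exactly four corners per face, closedness so that every edge bounds exactly two faces and no boundary corrections arise) are exactly the right ones. As for what each approach buys: your argument is self-contained and in effect \emph{proves} the discrete Gauss--Bonnet statement in this special case rather than quoting it, needing nothing beyond Euler's formula; the geometric reading the paper relies on is what explains why $\frac{\pi}{2}(4-k(v))$ deserves to be called curvature in the first place, extends to meshes with boundary via geodesic-curvature terms, and is the form of the statement that the rest of the paper (the holonomy condition, the degree computation $\deg(D_Q)=8g-8$) actually plugs into.
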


\begin{figure}[h!]
\centering
\begin{tabular}{cc}
\includegraphics[height=0.45\textwidth]{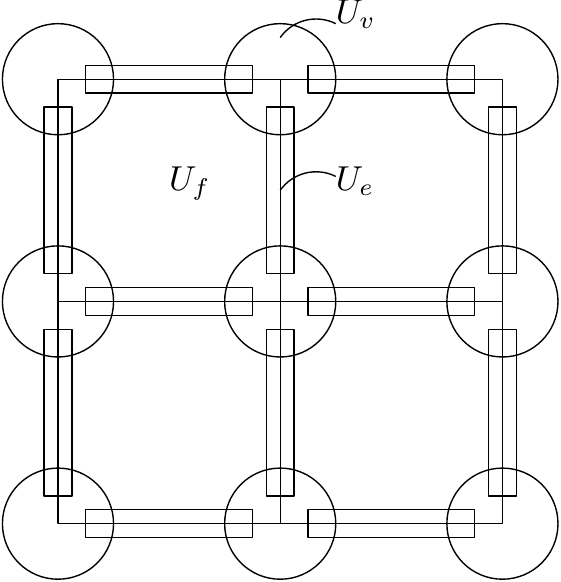}&
\includegraphics[height=0.35\textwidth]{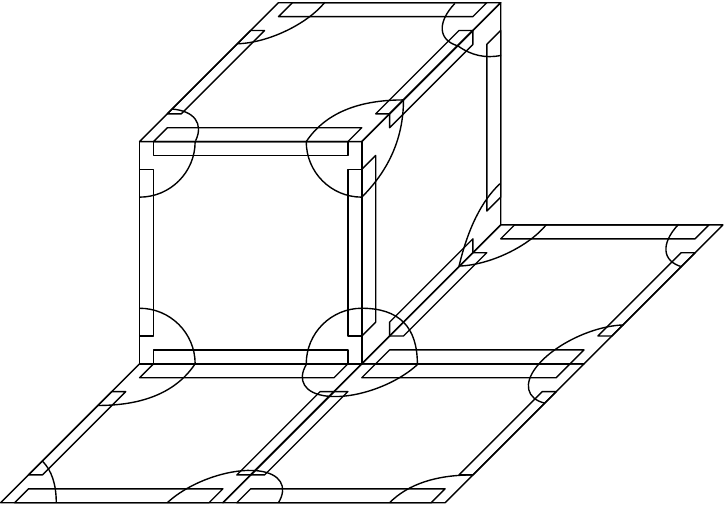}\\
(a) conformal atlas &(b) singularities\\
\end{tabular}
\caption{A quad-mesh induces a conformal atlas, such that the surface becomes a Riemann surface.}
\label{fig:quad_mesh_conformal_atlas}
\end{figure}

\subsection{Quad-mesh and Meromorphic Quartic Differential}

\begin{theorem}[Quad-Mesh and Riemann Surface]
Suppose $(\Sigma,\mathcal{Q})$ is a closed quadrilateral mesh, then the quad-mesh $\mathcal{Q}$ induces a conformal atlas $\mathcal{A}$, such that $(\Sigma,\mathcal{A})$ form a Riemann surface, denoted as $S_Q$.
\end{theorem}

\begin{proof} As shown in Fig.~\ref{fig:quad_mesh_conformal_atlas}, we treat each face as the unit Euclidean planar square, this assignes a flat Riemannian metric to the surface, such that the curvature is zero everywhere except at the singularities. At the singularity with valence $k$, the cone angle is $k\pi/2$.

First, we construct the open covering of the surface. The interior of each face $f$ is one open set $U_f$; each edge $e$ is covered by the interior of a rectangle $U_e$; each vertex $v$ is covered by a disk, $U_v$. It is obvious that
\[
    \Sigma \subset \bigcup_f U_f \bigcup_e U_e \bigcup_v U_v.
\]
Second, we establish the local complex parameter of each open set. For each face $U_f$, we choose the center of the face as the origin, the real and imaginary axises are parallel to the edges of the face, the local parameter is denoted as $z_f$; for each edge $U_e$, we choose the center as the origin, the real axis is parallel to the edge direction, the local parameter is $z_e$; for each regular vertex $U_v$, we choose the center of the disk as the origin, one edge as the real axis, the local parameter is $z_v$; for a singular vertex $U_v$, we first isometrically flatten $U_v$ on the complex plane, to obtain the local parameter $w_v$, the we use the complex power function to shrink it as
\[
    w_v^{\frac{4}{k}}\mapsto z_v,
\]
where $k$ is valence of the singular vertex.

Now we examine all the coordinate transition maps. We denote a subgroup of the planar rigid motion $G$ generated by
\[
    z \mapsto e^{i\frac{\pi}{2}}z, ~~~z\mapsto z + \frac{1}{2}, ~~~z\mapsto z +  \frac{i}{2},
\]
Then any transition map $\varphi_{fe}$ between a face and an adjacent edge $z_f\to z_e$ is an element in $G$, $\varphi_{fe}\in G$; and transition map $\varphi_{ev}$ between an edge and one of its end vertex $z_e\to z_v$ is an element in $G$, $\varphi_{ev}\in G$.

The transition between a singulary vertex and its neihboring edge or face is more complicated. For example, the transition from a face to its neighboring singular vertex is given by
\begin{equation}
    z_v = \left(e^{i\frac{n\pi}{2}}z_f + \frac{1}{2}(\pm 1 \pm i)\right)^\frac{4}{k}
    \label{eqn:vertex_face_transition}
\end{equation}

where $m,n\in \mathbb{Z}$, $k$ is the valence of $v$.

Hence all the transition maps are biholomorphic. $\mathcal{A}=\{(U_f,z_f), (U_e,z_e), (U_v, z_v)\}$ form a conformal atlas. $(\Sigma,\mathcal{A})$ is a Riemann surface.
\end{proof}

\begin{theorem}[Qaud-Mesh to Meromrophic Quartic Differential]
Suppose $(\Sigma,\mathcal{Q})$ is a closed quadrilateral mesh, then the quad-mesh $\mathcal{Q}$ induces a quartic differential $\omega_Q$ on $S_Q$. The valence-$k$ singular vertices correspond to poles or zeros of order $k-4$. Furthermore, the trajectories of $\omega_Q$ are finite.
\label{thm:quad_differential}
\end{theorem}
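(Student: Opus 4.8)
The plan is to construct the quartic differential $\omega_Q$ explicitly on the conformal atlas $\mathcal{A}$ built in the previous theorem, and then verify that it is globally well-defined and has the claimed divisor. The key observation is that the flat metric $\mathbf{g}_Q$ comes from treating each face as a unit square, so on every face chart $(U_f, z_f)$ and every edge chart $(U_e, z_e)$ the coordinate $z$ is a genuine isometric planar coordinate, and the transition maps between such charts all lie in the group $G$ generated by the quarter-turn $z \mapsto e^{i\pi/2} z$ and the half-integer translations. First I would define, on each regular chart, the local expression $\omega = (dz)^4$. Because any quarter-turn rotation $z \mapsto e^{i\pi/2}z$ satisfies $(d(e^{i\pi/2}z))^4 = e^{2\pi i}(dz)^4 = (dz)^4$, and translations have trivial derivative, the expression $(dz)^4$ is invariant under all of $G$. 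Hence the local pieces agree on overlaps of regular charts and glue to a holomorphic quartic differential on $S_Q \setminus \mathrm{Sing}$, nonvanishing there.

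Next I would analyze the singular vertex charts. On a chart around a valence-$k$ singular vertex, the coordinate is $z_v = w_v^{4/k}$ where $w_v$ is the isometric flattening; equivalently $w_v = z_v^{k/4}$, so $dw_v = \frac{k}{4} z_v^{k/4 - 1}\, dz_v$. Since the differential equals $(dw_v)^4$ in the isometric coordinate away from the vertex, pulling back gives
\[
    \omega = (dw_v)^4 = \left(\frac{k}{4}\right)^4 z_v^{k-4}\,(dz_v)^4.
\]
Thus in the uniformizing coordinate $z_v$ the differential has local form $f(z_v)(dz_v)^4$ with $f$ vanishing (or having a pole) of order exactly $k-4$: a zero when $k > 4$ and a pole when $k < 4$, matching $\nu_{z_v}(\omega) = k - 4$. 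This is precisely the correspondence claimed, and it is consistent with the metric picture, since the cone angle $k\pi/2$ produces a conical singularity of the associated flat metric whose order as a quartic differential singularity is $k-4$. I would check that this local expression is compatible with the transition maps of Eqn.~\eqref{eqn:vertex_face_transition}, which is again a routine computation because those maps are compositions of elements of $G$ with the power map $(\cdot)^{4/k}$ and the factor $e^{i n\pi/2}$ raised to the fourth power is $1$.

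For the finiteness of trajectories, the horizontal directions of $\omega_Q$ are exactly those tangent directions along which $(dz)^4$ is a positive real number, i.e. $dz \in \mathbb{R} \cup i\mathbb{R} \cup e^{i\pi/4}\mathbb{R} \cup e^{3i\pi/4}\mathbb{R}$ in a regular chart; among these, the directions parallel to the square edges ($dz \in \mathbb{R}$ and $dz \in i\mathbb{R}$) coincide with the horizontal/vertical edge directions of the quad-mesh. The trajectories of $\omega_Q$ through edge midpoints therefore run along the grid lines connecting opposite edges of the quads, which are exactly the geodesic loops of the finite geodesic lamination condition already recorded for $\mathbf{g}_Q$; a trajectory either closes up into a loop or terminates at a singular vertex, and since there are finitely many faces it cannot have infinite length. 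The main obstacle I anticipate is not any single step but the bookkeeping of orientations and branch choices at the singular charts: one must confirm that the $n$-valued fourth-root construction picks consistent branches so that the integer order $k-4$ (rather than some fractional value) emerges, and that the globally multi-valued fourth root of $\omega_Q$ has holonomy exactly in the rotation group $\{e^{i\pi k/2}\}$, tying the construction back to the holonomy condition of Eqn.~\eqref{eqn:holonomy}.
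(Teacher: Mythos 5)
Your proposal is correct and follows essentially the same route as the paper's proof: you define $\omega=(dz_f)^4$ on the regular charts, note its invariance under the transition group $G$ because fourth powers annihilate the factors $e^{in\pi/2}$, and pull back through the power map $w_v=z_v^{k/4}$ at a valence-$k$ vertex to obtain $\left(\frac{k}{4}\right)^4 z_v^{k-4}(dz_v)^4$, which is exactly the paper's Eqn.~(\ref{eqn:zero_pole}). The only substantive difference is that you elaborate the finiteness of trajectories, which the paper dismisses as obvious; note a small slip there: the directions where $(dz)^4$ is \emph{positive} real are only $\mathbb{R}\cup i\mathbb{R}$, while the diagonal lines $e^{i\pi/4}\mathbb{R}$ and $e^{3i\pi/4}\mathbb{R}$ you list make $(dz)^4$ negative real.
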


\begin{proof}
We construct a holomorphic 1-form on each face $U_f$, $dz_f$. Because the orientations of all faces are chosen individually, $dz_f$ is not globally defined. Then we define the holomorphic quadric form
\[
    \omega = (dz_f)^4,
\]
the $\omega$ is globally defined. Suppose a face $f$ and an edge $e$ are adjacent, then
\[
    dz_f = e^{i\frac{n\pi}{2}}dz_e, ~n\in \mathbb{Z},
\]
then $ (dz_f)^4 = (dz_e)^4$. Suppose an edge $e$ has an regular edge vertex $v$
\[
    dz_e = e^{i\frac{n\pi}{2}} dz_v,~n\in \mathbb{Z},
\]
therefore $(dz_e)^4 = (dz_v)^4$. Suppose a face $f$ has a regular vertex $v$, according to Eqn.~\ref{eqn:vertex_face_transition}, where $k=4$,
\[
    dz_v = e^{i\frac{n\pi}{2}} dz_f,
\]
hence $(dz_v)^4 = (dz_f)^4$.

Suppose $v$ is a singular vertex with valence $k \neq 4$, by Eqn.~\ref{eqn:vertex_face_transition}, we obtain
\[
    \frac{k}{4}z_v^{\frac{k-4}{4}} dz_v = e^{i\frac{n\pi}{2}} dz_f,
\]
therefore
\begin{equation}
    \left(\frac{k}{4}\right)^4 z_v^{k-4} (dz_v)^4 = (dz_f)^4 = \omega.
    \label{eqn:zero_pole}
\end{equation}
Hence, a valence $3$ singular vertex corresponds to a simple pole $\frac{c}{z_v} dz_v^4$, a valence $5$ singular vertex becomes a simple zero $c z_v dz_v^4$.

Therefore, the meromorphic quartic differential $\omega$ is globally defined, the valence-$k$ singular vertices correspond to poles or zeros with order $k-4$. The finiteness of the trajectories of $\omega$ is obvious.
\end{proof}

\begin{theorem}[Quartic Differential to Quad-Mesh]
Suppose $(\Sigma,\mathcal{A})$ is a Riemann surface, $\omega$ is a meromorphic quartic differential with finite trajectories, then $\omega$ induces a quadrilateral mesh $\mathcal{Q}$, such that the poles or zeros with order $k$ of $\omega$ corresponds to the singular vertices of $\mathcal{Q}$ with valence $k+4$.
\label{thm:differential_quad}
\end{theorem}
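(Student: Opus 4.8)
The plan is to reverse the construction of Theorem~\ref{thm:quad_differential}: instead of reading a quartic differential off a given square tiling, I would use the intrinsic flat structure of $\omega$ to lay the tiling down. The natural object to work with is the $4$-th root of $\omega$. Away from $Sing_\omega$, integrating a local branch of $\sqrt[4]{\omega}$ produces a distinguished coordinate $\zeta$ in which $\omega=(d\zeta)^4$; on an overlap two such coordinates satisfy $d\zeta'=e^{ik\pi/2}d\zeta$, hence $\zeta'=e^{ik\pi/2}\zeta+c$. Thus the natural charts form an atlas whose transition maps lie in the group $G$ generated by quarter-turns and translations, exactly the structure group appearing in the proof of the previous theorem. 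This gives a flat metric $|\omega|^{1/2}$ with holonomy contained in $\{e^{ik\pi/2}\}$, i.e.\ the holonomy condition of Eqn.~(\ref{eqn:holonomy}), and it realises the horizontal and vertical line fields of $\omega$ as the images of the coordinate axes in the $\zeta$-charts.

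Next I would analyse a singularity $p$ with $\nu_p(\omega)=m$. Writing $\omega=z^m(dz)^4$ locally and integrating gives $\zeta\sim z^{(m+4)/4}$, so a single loop $z\mapsto e^{2\pi i}z$ multiplies $\zeta$ by $e^{i\pi(m+4)/2}$; consequently the cone angle of $|\omega|^{1/2}$ at $p$ equals $\pi(m+4)/2$, which is precisely the cone angle of a valence-$(m+4)$ vertex under $\mathbf{g}_Q$. The same local model shows that exactly $m+4$ horizontal separatrices emanate from $p$, matching the $m+4$ edges that must meet at such a vertex. This is the computation that pins down the correspondence ``order $k$ $\leftrightarrow$ valence $k+4$'' asserted in the statement.

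The heart of the argument is to turn the foliation data into a cell complex, and this is where the finite-trajectory hypothesis enters. Since every trajectory is finite, each leaf of the horizontal foliation is either a closed loop or a critical leaf joining singularities; by a Strebel-type structure theorem the complement of the horizontal critical graph is a disjoint union of flat cylinders swept by closed horizontal trajectories, and likewise for the vertical foliation. I would then superimpose the two critical graphs: inside each horizontal cylinder the vertical critical leaves are straight segments crossing from one boundary circle to the other, and cutting along them divides the cylinder into flat rectangles, each bounded by two horizontal boundary arcs and two vertical critical leaves, hence a topological quadrilateral. Declaring these rectangles the faces, their sides the edges, and their corners together with the points of $Sing_\omega$ the vertices produces a cell partition $\mathcal{Q}$ all of whose cells are quadrilaterals; by the previous step each point of $Sing_\omega$ of order $k$ acquires valence $k+4$, while every grid corner sees four rectangles and is regular.

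The step I expect to be the main obstacle is this global assembly. One must verify that each complementary region really is a $4$-gon rather than a more complicated polygon, and in particular handle cylinders that carry no critical leaf on their boundary circles: such a cylinder is a genuine annulus and must be subdivided by auxiliary closed leaves into quadrilaterals, a choice that has to be made coherently across all cylinders (and, if one insists on recovering the unit-square metric $\mathbf{g}_Q$ exactly rather than merely a topological quad-mesh, requires a commensurability argument making a common refinement into unit squares possible). Controlling the combinatorics at the singular vertices, where $m+4$ rectangles wrap around a cone point of angle $\pi(m+4)/2$, is the delicate part; the flatness and the $G$-valued holonomy established in the first step are what make this bookkeeping tractable.
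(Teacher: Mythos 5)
Your proposal follows the same skeleton as the paper's proof, but the paper's version is far terser: it consists of exactly two steps, (i) the assertion, in one sentence, that the horizontal and vertical trajectories of $\omega$ partition the surface into rectangles, and (ii) the local computation at a singularity --- writing $\omega = c_1 z^k (dz)^4$, taking $\sqrt[4]{\omega} = c_2 z^{k/4}\,dz = c_2\, d\bigl(z^{(k+4)/4}\bigr)$, so that integration maps the $2\pi$ angle to $(k+4)\pi/2$, giving valence $k+4$. Your second paragraph is precisely step (ii), down to the same exponent bookkeeping. Where you genuinely add content is step (i): the paper offers no justification for the rectangle partition, whereas you derive the flat structure with quarter-turn holonomy from the $\zeta$-charts of $\sqrt[4]{\omega}$ and then invoke a Strebel-type cylinder decomposition, superimposing the horizontal and vertical critical graphs to cut the cylinders into rectangles. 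The difficulties you flag as the ``main obstacle'' --- cylinders whose boundaries carry no critical leaves and therefore need auxiliary closed leaves chosen coherently, the verification that complementary regions are genuine $4$-gons, and the commensurability needed to refine into unit squares --- are real, and the paper's proof simply does not address them; it also silently needs finiteness of \emph{both} the horizontal and vertical trajectories (note the paper elsewhere remarks that Strebel differentials can have infinite vertical trajectories). So the gap you identify is a gap in the paper's own argument, not a defect of your proposal; your version is the same approach carried out more carefully.
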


\begin{proof}
The horizontal and vertical trajectories of $\omega$ partition the surface into rectangles. Given a pole with local representation
\[
    \omega = c_1 z^k dz^4,
\]
the holomorphic 1-form is given by $\sqrt[4]{\omega}$ with local representation \[
    c_2 z^{\frac{k}{4}}dz = c_2 d( z^\frac{k+4}{4}),
\]
the integration of  $\sqrt[4]{\omega}$ maps the $2\pi$ angle to $\frac{k+4}{2}\pi$, therefore the valence of the singular vertex equals to $k+4$.
\end{proof}

Fig.~\ref{fig:genus_two} shows the quad-meshes induced by holomorphic differentials.

\subsection{Abel-Jacobi condition}

\begin{definition}[Quad-Mesh Divisor]
Suppose $\mathcal{Q}$ is a closed quadrilateral mesh, $\omega_Q$ is the induced meromorphic quadric form. Then the quad-mesh induces a divisor $D_Q=(\omega_Q)$:
\[
    D_Q = (\omega_Q) = \sum_{v\in \mathcal{Q}} (k(v)-4) v,
\]
where $k(v)$ is valence of the vertex $v$.
\end{definition}

\begin{theorem}Suppose $\mathcal{Q}$ is a closed quadrilateral mesh of genus $g$, the degree of the induced divisor $D_Q$ is
\[
    \deg(D_Q) = 8g-8.
\]
\end{theorem}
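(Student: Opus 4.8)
The plan is to compute $\deg(D_Q) = \sum_{v\in\mathcal{Q}}(k(v)-4)$ directly, and I see two independent routes, either of which finishes the argument in one line of algebra.

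The first route is purely combinatorial, via the Gauss--Bonnet theorem already established for the quad-mesh metric $\mathbf{g}_Q$. That theorem gives $\sum_{v}\frac{\pi}{2}(4-k(v)) = 2\pi\chi(\Sigma)$. Dividing by $\pi/2$ and negating yields $\sum_v(k(v)-4) = -4\chi(\Sigma)$. Since $\Sigma$ is a closed orientable surface of genus $g$, we have $\chi(\Sigma)=2-2g$, so $\deg(D_Q) = -4(2-2g) = 8g-8$. This is the route I would write down first, because it uses only the metric structure and needs no appeal to the differential $\omega_Q$.

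The second route exploits the identification $D_Q = (\omega_Q)$, where $\omega_Q$ is the meromorphic quartic differential of Theorem~\ref{thm:quad_differential}. It suffices to know that the divisor of \emph{any} meromorphic $4$-differential on a genus-$g$ surface has degree $4(2g-2)$. Fix one nonzero meromorphic $1$-form $\eta$; its divisor has degree $2g-2$. Then $\omega_Q/\eta^{4}$ is a globally defined meromorphic function, so $(\omega_Q) - 4(\eta)$ is a principal divisor and therefore has degree zero. Hence $\deg(D_Q) = \deg((\omega_Q)) = 4\deg((\eta)) = 4(2g-2) = 8g-8$, recovering the same answer and confirming consistency between the combinatorial and the analytic descriptions.

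There is essentially no hard step here; the statement is a corollary of results already proved, so the ``main obstacle'' is really just choosing which viewpoint to present. The only points requiring care are bookkeeping ones: using the correct sign convention in passing from the curvature sum $\sum(4-k)$ to the divisor degree $\sum(k-4)$, and invoking $\chi=2-2g$ for the closed orientable case. If one prefers to avoid the differential entirely, the Gauss--Bonnet route is self-contained; if one wants to emphasize the Riemann-surface viewpoint that motivates the rest of the paper, the degree-of-canonical-divisor route is preferable, since it is the computation that later feeds into the Abel--Jacobi condition.
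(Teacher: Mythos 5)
Your first route is exactly the paper's proof: the paper derives $\deg(D_Q)=8g-8$ in one line from the Gauss--Bonnet condition for $\mathbf{g}_Q$, i.e.\ $\sum_v \frac{\pi}{2}(4-k(v)) = 2\pi\chi(\Sigma)$ together with $\chi(\Sigma)=2-2g$, just as you do. Your second route (degree of a principal divisor is zero, so $\deg((\omega_Q))=4\deg((\eta))=4(2g-2)$) is also correct and consistent with the facts stated in the paper, but it is an alternative the paper does not use for this theorem.
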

\begin{proof}Directly induced by the Gauss-Bonnet condition  of $\mathbf{g}_Q$ in Eqn.~\ref{eqn:gauss_bonnet}.
\end{proof}
\begin{figure}[h!]
\centering
\begin{tabular}{rl}
%\includegraphics[height=0.45\textwidth]{figures/eight/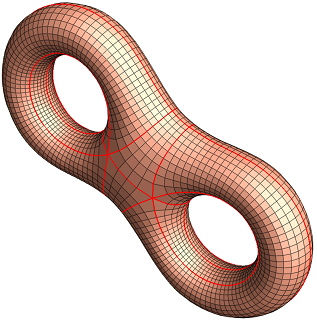}&
%\includegraphics[height=0.45\textwidth]{figures/eight/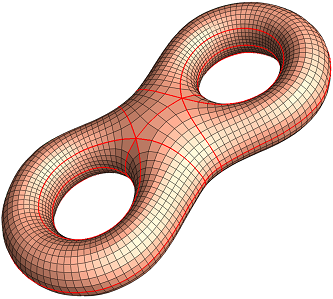}\\
%(a) quad-mesh induces by a &
%holomorphic quadratic differential\\
\includegraphics[height=0.45\textwidth]{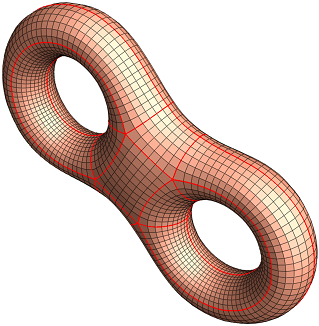}&
\includegraphics[height=0.45\textwidth]{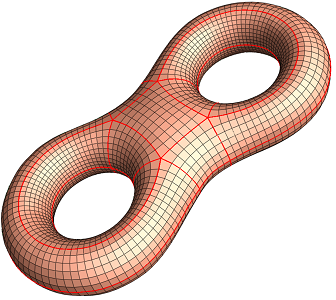}\\
 quad-mesh induces by a &
holomorphic quartic differential\\
\end{tabular}
\caption{Quadrilateral meshes induced by holomorphic differentials.}
\label{fig:genus_two}
\end{figure}

\begin{theorem}[Quad-mesh Abel-Jacobi condition]Suppose $\mathcal{Q}$ is a closed quadrilateral mesh, $S_Q$ is the induced Riemann surface, $D_Q$ is the induced divisor. Assume $\omega_0$ is an arbitrary holomorphic 1-form on $S_Q$, then
\begin{equation}
        \mu(D_Q - 4(\omega_0)) = 0
        \label{eqn:abel_condition}
\end{equation}
in the Jacobian $J(S_Q)$.
\label{thm:Abel_Jacobian_condition}
\end{theorem}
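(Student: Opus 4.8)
The plan is to exhibit the degree-zero divisor $D_Q - 4(\omega_0)$ as the divisor of a globally defined meromorphic \emph{function} on $S_Q$ and then invoke Abel's theorem (Theorem~\ref{thm:abel_theorem}), which asserts that a degree-zero divisor is principal precisely when its image under $\mu$ vanishes in the Jacobian. Everything hinges on comparing the two natural quartic differentials available to us: the one coming from the mesh and the fourth power of $\omega_0$.

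First I would recall from Theorem~\ref{thm:quad_differential} that the quad-mesh induces a meromorphic quartic differential $\omega_Q$ on $S_Q$ with $(\omega_Q) = D_Q$. The key observation is that the fourth power $\omega_0^4$ of the given holomorphic $1$-form is itself a meromorphic quartic differential: if $\omega_0 = \varphi_\alpha\, dz_\alpha$ locally, then $\omega_0^4 = \varphi_\alpha^4 (dz_\alpha)^4$, and its divisor is $(\omega_0^4) = 4(\omega_0)$. I would then form the quotient $f := \omega_Q / \omega_0^4$. Because both numerator and denominator are quartic differentials, in every chart they carry the same transition factor $(dz_\alpha)^4$; this factor cancels in the quotient, so $f$ has a chart-independent local expression as an honest meromorphic function on $S_Q$. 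Consequently $(f) = (\omega_Q) - (\omega_0^4) = D_Q - 4(\omega_0)$, so $D_Q - 4(\omega_0)$ is a principal divisor.

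Finally I would check the degree-zero hypothesis needed for Abel's theorem: by the theorem immediately preceding, $\deg(D_Q) = 8g-8$, while $\deg((\omega_0)) = 2g-2$ for any holomorphic $1$-form, so $\deg\bigl(4(\omega_0)\bigr) = 8g-8$ and hence $\deg\bigl(D_Q - 4(\omega_0)\bigr) = 0$. Abel's theorem then yields $\mu\bigl(D_Q - 4(\omega_0)\bigr) = 0$ in $J(S_Q)$, as claimed. (When $g=0$ the Jacobian is a single point and the statement is trivial, so one may assume $g>0$ where nonzero holomorphic $1$-forms exist.)

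I expect the only genuine subtlety to be the well-definedness of $f$ as a meromorphic function — that is, verifying that the $(dz_\alpha)^4$ transition factors of the two quartic differentials really do cancel so that $f$ patches together into a global function rather than a section of a nontrivial line bundle. This is precisely the step that uses the common order $n=4$ of both differentials; once it is in place, the remainder is bookkeeping with divisors together with the degree count.
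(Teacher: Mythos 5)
Your proposal is correct and follows essentially the same route as the paper's own proof: form the ratio $f = \omega_Q/\omega_0^4$, observe it is a global meromorphic function with $(f) = D_Q - 4(\omega_0)$, and apply Abel's theorem. In fact your write-up is more complete than the paper's, since you also verify the degree-zero hypothesis $\deg\bigl(D_Q - 4(\omega_0)\bigr) = 8g-8-4(2g-2) = 0$ and the chart-independence of $f$, both of which the paper leaves implicit.
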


\begin{proof}
Suppose the meromorphic differential induced by $\mathcal{Q}$ is $\omega_Q$. The ratio $f=\omega_Q/\omega_0^4$ is a meromorphic function, therefore according to Abel theorem $\mu((f)) = 0$ in $J(S_Q)$,
\[
    \mu((f))=\mu((\omega_Q/\omega_0^4)) = \mu( (\omega_Q) - (\omega_0^4)) = \mu(D_Q - 4(\omega_0)) = 0~~~\mod \Gamma.
\]
in $J(S_Q)$.
\end{proof}

Namely, the divisors of all meromorphic differentials are equivalent, and are mapped to the same point in $J(S_Q)$ by the Abel-Jacobi map.

\begin{figure}[h!]
\centering
\begin{tabular}{cc}
\includegraphics[height=0.45\textwidth]{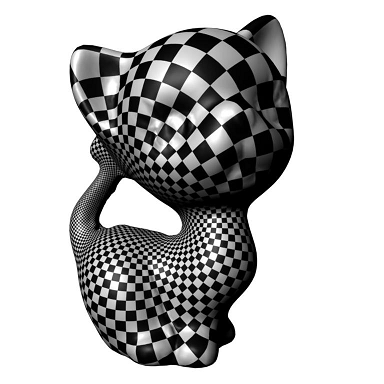}&
\includegraphics[height=0.45\textwidth]{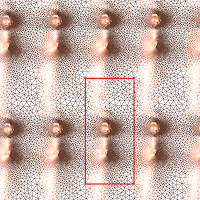}\\
(a) a holomorphic 1-form $\omega_0$ &(b) a fundamental domain $\Omega$\\
\end{tabular}
\caption{The proof of corollary ~(\ref{cor:genus_one}) based on Abel-Jacobi condition theorem ~(\ref{thm:Abel_Jacobian_condition}). No genus one closed quad-mesh has only one valence $3$ and one valence $5$ singular vertices.}
\label{fig:genus_one}
\end{figure}

\begin{corollary}Suppose $\mathcal{Q}$ is a genus one closed quadrilateral mesh. If $\mathcal{Q}$ has only one valence $3$ and one valence $5$ singular vertices, and no other singular vertex, then $\mathcal{Q}$ does not exist.
\label{cor:genus_one}
\end{corollary}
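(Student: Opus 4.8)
The plan is to apply the Abel--Jacobi condition (Theorem~\ref{thm:Abel_Jacobian_condition}) to the hypothetical mesh and extract a contradiction from the special features of genus one. First I would specialize to $g=1$: if such a quad-mesh $\mathcal{Q}$ existed, its induced Riemann surface $S_Q$ would be a torus. On a torus every holomorphic $1$-form is nowhere vanishing, since the divisor of a holomorphic $1$-form has degree $2g-2=0$ and $\Omega^1$ is one-dimensional; hence we may take $\omega_0$ with $(\omega_0)=0$. The Abel--Jacobi condition $\mu(D_Q-4(\omega_0))=0$ then collapses to $\mu(D_Q)=0$ in $J(S_Q)$.

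Next I would compute the divisor. By the definition of $D_Q$, the valence $3$ vertex $p$ contributes $(3-4)p=-p$ and the valence $5$ vertex $q$ contributes $(5-4)q=+q$, while every regular vertex contributes zero. Thus $D_Q=q-p$, a degree zero divisor with $p\neq q$ (they are genuinely distinct singular vertices). The Abel--Jacobi condition therefore reads $\mu(q)-\mu(p)=0$ in $J(S_Q)$.

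Finally I would exploit the fact that in genus one the Abel--Jacobi map $\mu\colon S_Q\to J(S_Q)=\mathbb{C}/\Gamma$ is a biholomorphism of complex tori, so after fixing the base point it is in particular injective on points. Hence $\mu(q)=\mu(p)$ forces $q=p$, contradicting $p\neq q$. Equivalently, by Abel's theorem (Theorem~\ref{thm:abel_theorem}) the vanishing $\mu(q-p)=0$ would make $q-p$ a principal divisor, i.e.\ there would exist a meromorphic function on the torus with a single simple pole and a single simple zero and nothing else; such a function is a degree one holomorphic map $S_Q\to\mathbb{C}\cup\{\infty\}$, hence a biholomorphism, which is impossible because a torus is not conformally equivalent to the sphere. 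Either formulation yields the contradiction, so no such quad-mesh exists.

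The main obstacle is precisely this last step: one must justify that a degree zero divisor $q-p$ with distinct $p,q$ is never principal on an elliptic curve, equivalently the injectivity of the Abel--Jacobi map in genus one. This is where the genus one hypothesis is indispensable, since the very same divisor could be principal on a higher genus surface; it is exactly this obstruction that rules out the claimed singularity configuration, and it is why the purely topological (Euler) and purely metric (Gauss--Bonnet) criteria fail to exclude it while the Abel--Jacobi condition succeeds.
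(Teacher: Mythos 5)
Your proposal is correct and follows essentially the same route as the paper: reduce the Abel--Jacobi condition to $\mu(D_Q)=\mu(q-p)=0$ using the fact that a holomorphic $1$-form on a torus has trivial divisor, and then conclude $p=q$, a contradiction. In fact you supply the justification for the final step --- injectivity of the Abel--Jacobi map in genus one, or equivalently the nonexistence of a degree-one meromorphic function on a torus --- which the paper's proof asserts without argument, so your write-up is a slightly more complete version of the same proof.
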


\begin{proof}
Suppose $p$ is the valence $3$ singularity, $q$ is the valence $5$ singularity. Then $p$ is the simple pole of $\omega_Q$, $q$ is the simple zero of $\omega_Q$. Suppose $\omega_0$ is the canonical holomorphic 1-form on the Riemann surface $S_Q$. $\{a,b\}$ is a set of canoincal homology group basis, $\Omega$ is a fundamental domain. Choose a base point $p_0\in \Omega$ and paths $\gamma_p,\gamma_q\subset \Omega$, connecting the base point to the pole and the zero. According to Abel-Jacobi condition in Eqn.~\ref{eqn:abel_condition},
\[
    \mu((p-q))=\int_{\gamma_p} \omega_0 - \int_{\gamma_q} \omega_0 = 0,
\]
therefore the pole $p$ and the zero $q$ coincide, name the valence $3$ and valence $5$ vertices coincide. Contradiction, hence such kind of $\mathcal{Q}$ does not exist.
\end{proof}

\begin{theorem} [Inverse theorem of Abel-Jacobi Condition]Suppose $S$ is a compact Riemann surface of genus $g$, $\omega_0$ is a holomorphic one-form, $D$ is a divisor, satisfying
\begin{equation}
    \mu(D-4(\omega_0))=0
    \label{eqn:quad_mesh_divisor}
\end{equation}
in $J(S)$, then there exists a meromorphic quartic differential $\omega$, such that $(\omega) = D$.
\label{thm:inverse}
\end{theorem}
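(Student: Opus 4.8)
The plan is to recognize Equation~(\ref{eqn:quad_mesh_divisor}) as precisely the hypothesis of Abel's theorem (Theorem~\ref{thm:abel_theorem}) applied to a difference divisor, and then to recover the quartic differential by multiplying the resulting meromorphic function by the fixed holomorphic quartic differential $\omega_0^4$. Concretely, I would first set
\[
    D' := D - 4(\omega_0),
\]
and verify that $D'$ has degree zero. Since $\omega_0$ is a holomorphic $1$-form we have $\deg((\omega_0)) = 2g-2$, hence $\deg(4(\omega_0)) = 8g-8$; for the statement to be consistent $D$ must have the degree of a quartic differential divisor, namely $4(2g-2)=8g-8$, so I would record $\deg(D)=8g-8$ and conclude $\deg(D')=0$. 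This degree balance is the implicit compatibility requirement, because without it the condition $\mu(D')=0$ does not characterize principality.

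Next I would invoke Abel's theorem directly. Since $D'$ is a degree-zero divisor and, by hypothesis~(\ref{eqn:quad_mesh_divisor}), $\mu(D')=0$ in $J(S)$, Theorem~\ref{thm:abel_theorem} guarantees that $D'$ is a principal divisor: there exists a meromorphic function $f$ on $S$ with
\[
    (f) = D' = D - 4(\omega_0).
\]
I would then produce the quartic differential by scaling. Define
\[
    \omega := f\,\omega_0^4.
\]
As $\omega_0^4$ is a holomorphic quartic differential and $f$ is meromorphic, $\omega$ is a meromorphic quartic differential, and its divisor satisfies
\[
    (\omega) = (f) + (\omega_0^4) = \big(D - 4(\omega_0)\big) + 4(\omega_0) = D,
\]
which is exactly the desired conclusion.

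The genuinely delicate point is not the algebra but the applicability of Abel's theorem, i.e.\ the passage from $\mu(D')=0$ to the \emph{existence} of $f$: this is the hard (surjectivity/inversion) direction of Abel's theorem and rests on the full Riemann surface machinery, which I am taking as given. The only case needing separate handling is $g=0$, where $J(S)$ is a single point and Theorem~\ref{thm:abel_theorem} is vacuous; there every degree-zero divisor on the sphere is automatically principal, so the same $f$ exists and the construction $\omega = f\,\omega_0^4$ goes through unchanged. I would finally emphasize that this theorem yields only the \emph{existence} of a meromorphic quartic differential with the prescribed divisor; whether its trajectories are finite, so that it actually induces a quad-mesh through Theorem~\ref{thm:differential_quad}, is a separate question not settled by this argument.
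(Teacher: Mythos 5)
Your proof is correct and follows essentially the same route as the paper's: apply Abel's theorem to the degree-zero divisor $D-4(\omega_0)$ to obtain a meromorphic function $f$ with $(f)=D-4(\omega_0)$, then set $\omega = f\,\omega_0^4$ so that $(\omega)=D$. Your added observations --- that $\deg(D)=8g-8$ is the implicit hypothesis making $D-4(\omega_0)$ degree zero, and that $g=0$ needs a separate (trivial) argument since the paper's Abel theorem is stated for $g>0$ --- are points the paper's proof silently glosses over, and they strengthen rather than deviate from its argument.
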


\begin{proof}
According to Abel theorem, $\mu(D-4(\omega_0))=0$ in $J(S)$ implies there exists a meromorphic function $f$, such that $(f)=D-4(\omega_0)$, then let
\[
    \omega = f \cdot \omega_0^4,
\]
then $(\omega) = (f)+(\omega_0^4) = D$ is the desired meromorphic quartic differential.
\end{proof}

\begin{corollary}Suppose $S$ is a compact Riemann surface of genus $g$, $\omega_0$ is a holomorphic one-form, $D$ is a divisor, satisfying condition in Eqn.~\ref{eqn:quad_mesh_divisor} in $J(S)$, then there exists a meromorphic quartic differential $\omega$, such that $(\omega) = D$. Furthermore, if the trajectories of $\omega$ are finite, then $\omega$ induces a quadrilateral mesh $\mathcal{Q}$.
\end{corollary}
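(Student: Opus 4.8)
The plan is to obtain the corollary by stitching together the two immediately preceding results: Theorem~\ref{thm:inverse} delivers the differential, and Theorem~\ref{thm:differential_quad} turns it into a mesh under the finiteness hypothesis.

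First I would cite Theorem~\ref{thm:inverse} directly. Since the hypothesis $\mu(D-4(\omega_0))=0$ in $J(S)$ is exactly the premise of that theorem, Abel's theorem (Theorem~\ref{thm:abel_theorem}) furnishes a meromorphic function $f$ with $(f)=D-4(\omega_0)$, and then $\omega:=f\cdot\omega_0^4$ is a meromorphic quartic differential with $(\omega)=(f)+4(\omega_0)=D$. This settles the first assertion with no work beyond invoking the theorem.

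Next, assuming the trajectories of $\omega$ are finite, I would appeal to Theorem~\ref{thm:differential_quad}. The horizontal and vertical line fields of $\omega$ foliate $S\setminus Sing_\omega$; finiteness forces every leaf to be either a closed loop or a separatrix joining singularities, so the critical graph (the union of separatrices) is a finite embedded graph whose complementary regions are flat rectangles in the metric $|\sqrt[4]{\omega}|$. Each such region is a topological quadrilateral, so the decomposition is a quad-mesh $\mathcal{Q}$. At a zero or pole of $\omega$ of order $k$, the local primitive $\int\sqrt[4]{\omega}\sim z^{(k+4)/4}$ opens the angle $2\pi$ to $(k+4)\pi/2$, so that singularity becomes a vertex of valence $k+4$; since $(\omega)=D$, the singular vertices of $\mathcal{Q}$ are exactly the support of $D$ with the prescribed valences.

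The only genuine content, were one to unpack Theorem~\ref{thm:differential_quad} rather than cite it, is verifying that finite trajectories really produce a \emph{finite} cell complex with quadrilateral faces: one must exclude recurrent or spiraling leaves and check that the separatrix graph has finitely many vertices and edges and that each complementary face is a metric rectangle. The hard part is thus entirely absorbed into Theorem~\ref{thm:differential_quad}, so the corollary follows at once by combining it with Theorem~\ref{thm:inverse}.
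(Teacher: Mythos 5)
Your proposal matches the paper's own argument: the paper treats the first assertion as an immediate restatement of Theorem~\ref{thm:inverse}, and its proof of the corollary consists of exactly the slicing-along-trajectories argument you delegate to Theorem~\ref{thm:differential_quad} (``the surface can be sliced along the trajectories of $\omega$, especially those through the zeros and poles''). Your extra discussion of the critical graph and the metric rectangles merely unpacks what the paper leaves implicit, so the two proofs are essentially identical in structure.
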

\begin{proof} The surface can be sliced along the trajectories of $\omega$, especially those through the zeros and poles. This produces a quadrilateral mesh.
\end{proof}

\section{Computational Algorithms}

This section explain the computational algorithms to verify Abel-Jacobi condition and construct meromorphic quartic differentials in details.

\subsection{Abel-Jacobi Condition Verification}

Given a closed triangle mesh, and a divisor, we would like to verify if the divisor satisfies the Abel-Jacobi condition. This involves the computation of the canonical basis for the homology group and holomorphic differential group. The canonical homology group basis is carried out using the geometry-aware handle loop and tunnel loop algorithm in \cite{Dey:2008:CGH:1360612.1360644}.
the holomorphic differential basis is computed using the algorithm described in \cite{gu2003global}.
The computation of the period matrix and Abel-Jacobi map are straight forward.  The details of the algorithm is described in Alg.~\ref{alg:Abel_Jacobi_condition}.

\begin{algorithm}[h]
\KwIn{Closed Surface $S$ of genus $g$; Divisor $D$ }
\KwOut{Whether $D$ satisfies Abel Condition}
\nl if $\deg(D) \neq 8g-8$ then return false;\\
\nl Compute the canonical homology group generators $\{a_1,\dots,a_g;b_1,\dots,b_g\}$;\\
\nl Compute the dual holomorphic 1-form basis $\{\omega_1,\cdots,\omega_g\}$;\\
\nl Compute the period matrix of $S$;\\
\nl Compute the Abel-Jacobi map $\mu(D)$;\\
\nl Solve equation group
\[
    (\text{Img}~B) \beta = \text{Img} \mu(D),
\]
to obtain $\beta=(\beta_1,\beta_2,\dots, \beta_g)^T$.\\
\nl Solve equation group
\[
    A\alpha = \mu(D) - B\beta
\]
to obtain $\alpha=(\alpha_1,\alpha_2,\dots, \alpha_g)^T$.\\
\nl if all $\alpha_i$'s and $\beta_j$'s are integer then return ture; otherwise return false.
    \caption{{\bf Abel Condition Verification} \label{alg:Abel_Jacobi_condition}}
\end{algorithm}

\begin{figure*}[h]
\centering
\begin{tabular}{cc}
	\includegraphics[height=0.33\textwidth]{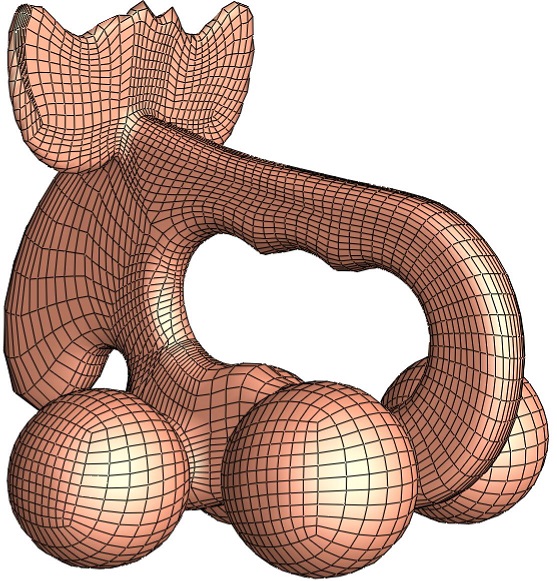} &
	\includegraphics[height=0.33\textwidth]{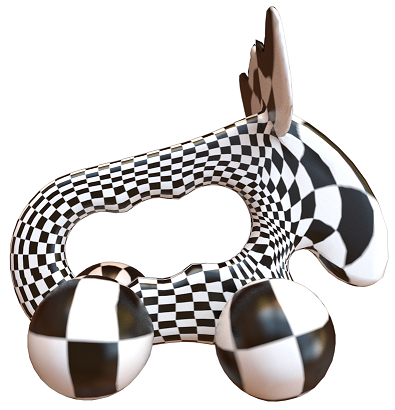} \\
	(a) original quad mesh & (b) holo 1 form  \\
\end{tabular}
\caption{The genus 1 elk model.}
	\label{fig:elk}
\end{figure*}

\paragraph{Genus One Cases}

This elk model in Fig.(\ref{fig:elk}) is a genus one surface. The left frame shows the quad-mesh $Q$, the right frame shows the holomorphic 1-form $\omega_0$. The induced meromorphic quartic differential $\omega_Q$ has $26$ poles and $26$ zeros,
\[
    D_Q=(\omega_Q)= \sum_{i=1}^{26} (p_i-q_i).
\]
The results of the Abel-Jacobi map are as follows:
\[
\mu\left(\sum_{i=1}^26 p_i \right) = 5.8952427+i2.7571920, \quad
\mu\left(\sum_{j=1}^26 q_j \right) = 5.8954460+i2.7571919.
\]
Therefore, $\mu(D_Q)$ is the difference between them, which is $-2.03332e-04~+~i1.044e-07$, very close to the theoretic prediction $(0,0)$.

\begin{figure*}[h]
\centering
\begin{tabular}{cc}
	\includegraphics[height=0.33\textwidth]{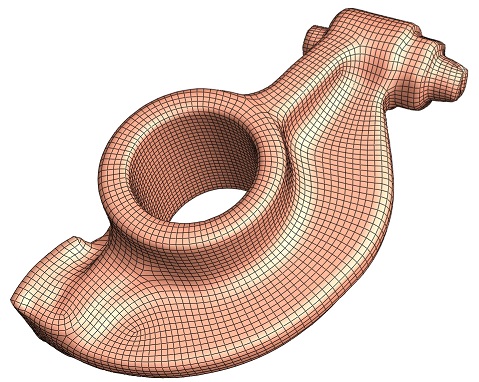} &
	\includegraphics[height=0.33\textwidth]{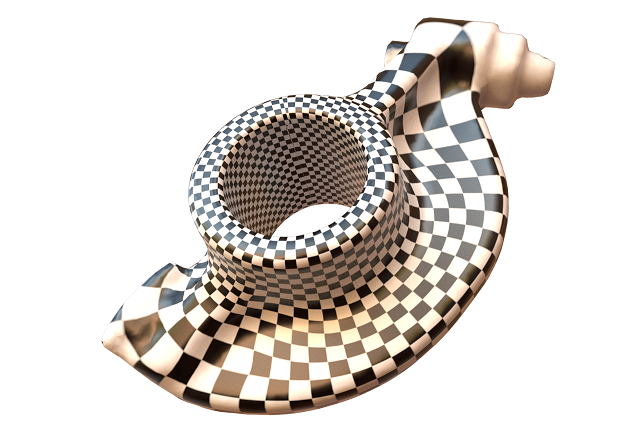} \\
	(a) original quad mesh & (b) holo 1-form  \\
\end{tabular}
\caption{The genus 1 rockerarm model.}
	\label{fig:rockerarm}
\end{figure*}

Fig.~\ref{fig:rockerarm} shows another genus one model, the rocker-arm. The left frame shows the quad-mesh $Q$, the right frame shows the holomorphic 1-form $\omega_0$. The induced meromorphic quartic differential has $18$ poles and $18$ zeros.
\[
    D_Q = \sum_{i=1}^{18}(p_i-q_i),
\]
The results of the Abel-Jacobi map are as follows:
\[
\mu\left(\sum_{j=1}^{18} p_j \right) = 2.61069+i0.588368,\quad
\mu\left(\sum_{i=1}^{18} q_i \right) = 2.61062+i0.588699.
\]
Hence, $\mu(D_Q)$ is the difference between them, which equals to $6.967e-05~-~i3.3064e-4$, very close to the origin in $J(S_Q)$.

\begin{figure}[h!]


\centering
\begin{tabular}{rl}
\includegraphics[height=0.45\textwidth]{eight_4saddle_quad_2.png}&
\includegraphics[height=0.45\textwidth]{eight_4saddle_quad_1.png}\\
\end{tabular}
\caption{The input genus two quad-mesh.}
\label{fig:genus_two_quad_mesh}
\end{figure}

\begin{figure}[h!]
\centering
\begin{tabular}{rl}
\includegraphics[height=0.485\textwidth]{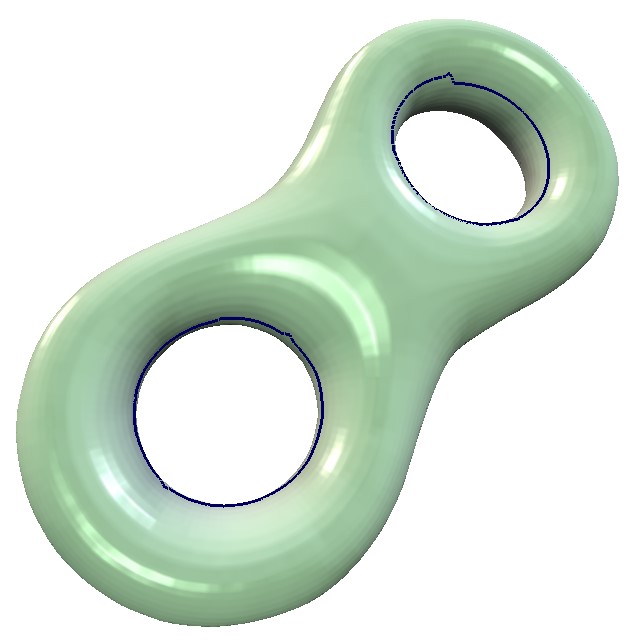}&
\includegraphics[height=0.485\textwidth]{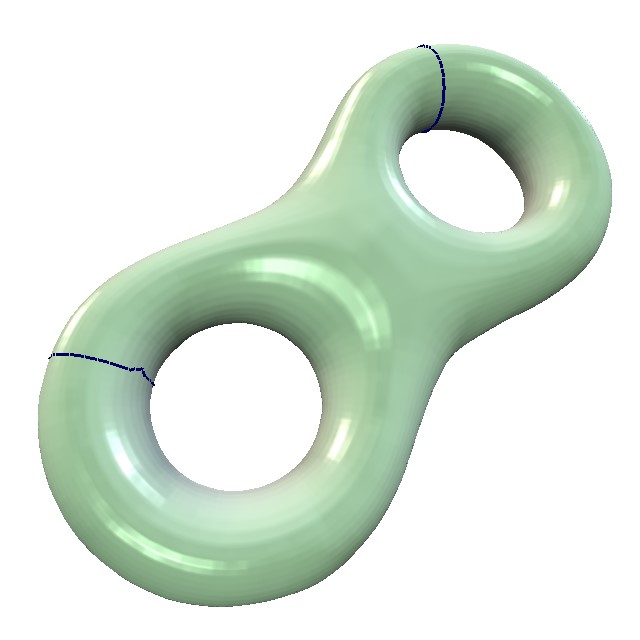}\\
(a) tunnel loops & (b) handle loops\\
\end{tabular}
\caption{The homology group basis.}
\label{fig:genus_two_homology_basis}
\end{figure}

\begin{figure}[h!]
\centering
\begin{tabular}{ll}
\includegraphics[height=0.485\textwidth]{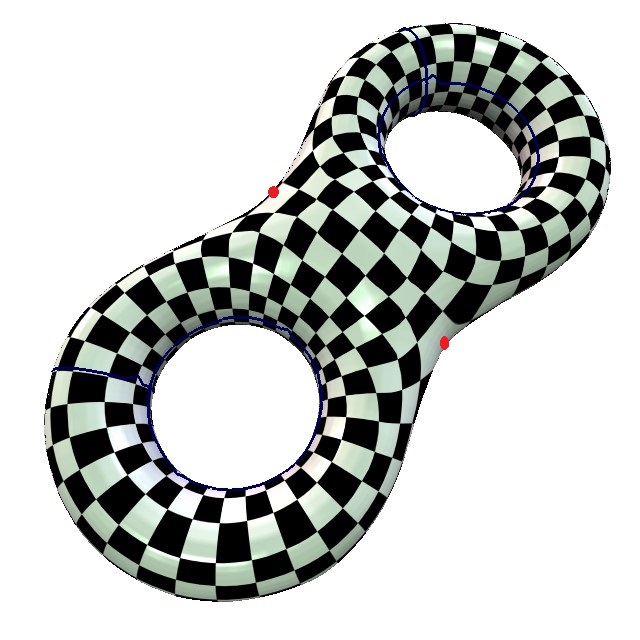}&
\includegraphics[height=0.485\textwidth]{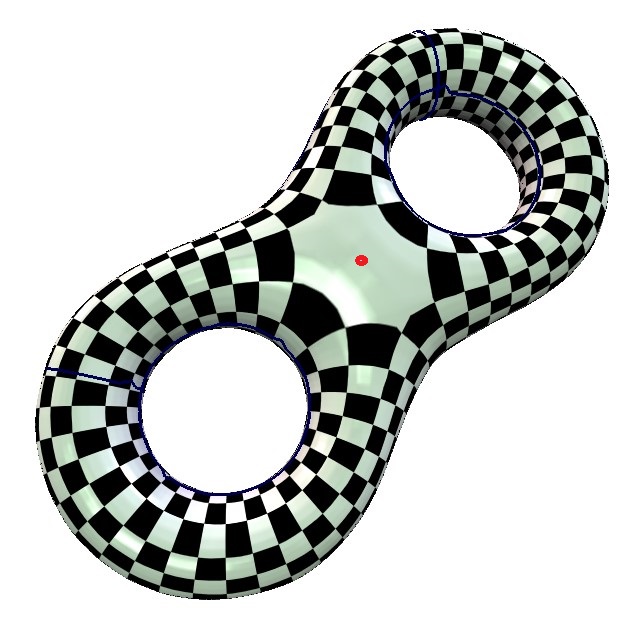}\\
\end{tabular}
\caption{The holomorphic differential basis.}
\label{fig:genus_two_holo_basis}
\end{figure}

\paragraph{Genus Two Case}

Fig.~\ref{fig:genus_two_quad_mesh} shows a genus two quad-mesh $Q$,which has four order two zeros. Fig.~\ref{fig:genus_two_homology_basis} shows the homology group basis of the mesh, the tunnel loops $a_0,a_1$ and the handle loops $b_0,b_1$. Fig.~\ref{fig:genus_two_holo_basis} shows holomorphic differential basis $\varphi_0$ and $\varphi_1$ computed under the quad-mesh metric $\mathbf{g}_Q$. We set $\varphi_0$ as $\omega_0$ and verify the Abel-Jacobi condition by computing the Abel-Jacobi map $\mu(D_Q - 4(\omega_0))$. The period matrix $A$ of the Riemann surface $S_Q$ is
$$
{\footnotesize
\left(
\begin{array}{cc}
0.99999999~-~i1.4209e-09& -0.99999989+i 6.01812e-08 \\
0.99999998+i5.12829e-09 & 0.99999992-i2.88896e-08
\end{array}
\right)
}
$$
The period matrix $B$ is
$$
{\footnotesize
\left(
\begin{array}{cc}
3.18e-08 +i 0.38191542 & 4.7433845e-20 +i 0.3861979 \\
1.433e-08+i 0.44392235 & -2.3716923e-20 -i 0.44820492
\end{array}
\right)
}
$$
The Abel-Jacobi image of the divisor,
$$
\mu(D_Q-4(\omega_0)) =
{\footnotesize
\left(
\begin{array}{c}
1e-06 \\
2e-07 -i 1.6e-06
\end{array}
\right)},
$$

which is very close to $0$. This shows the Abel-Jacobi condition holds for the quad-mesh in Fig.~\ref{fig:genus_two_quad_mesh}.

\begin{figure}[h]
\centering
\begin{tabular}{cc}
	\includegraphics[height=0.45\textwidth]{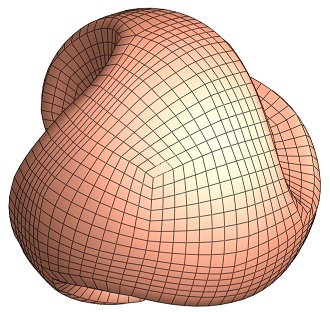} &
\includegraphics[height=0.45\textwidth]{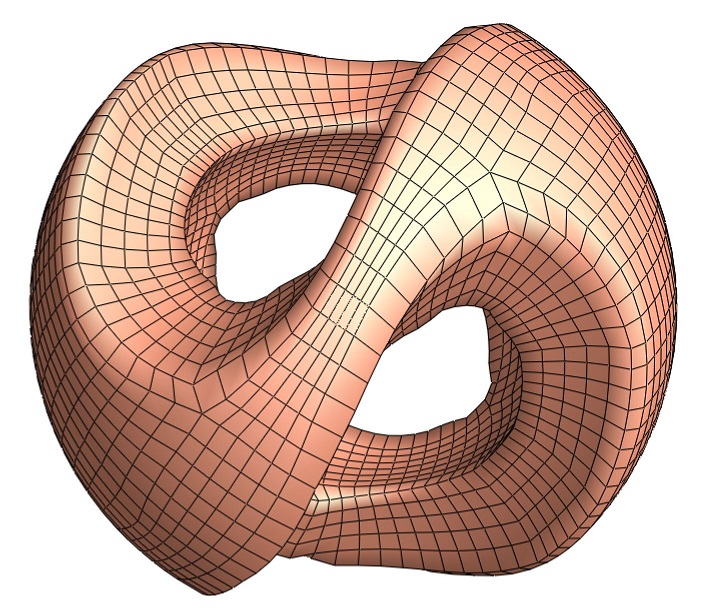}\\
\end{tabular}
\caption{A genus $2$ quad-mesh of a sculpture model.}
	\label{fig:scultpure_quad}
\end{figure}

\begin{figure}[h]
\centering
\begin{tabular}{cc}
	\includegraphics[height=0.45\textwidth]{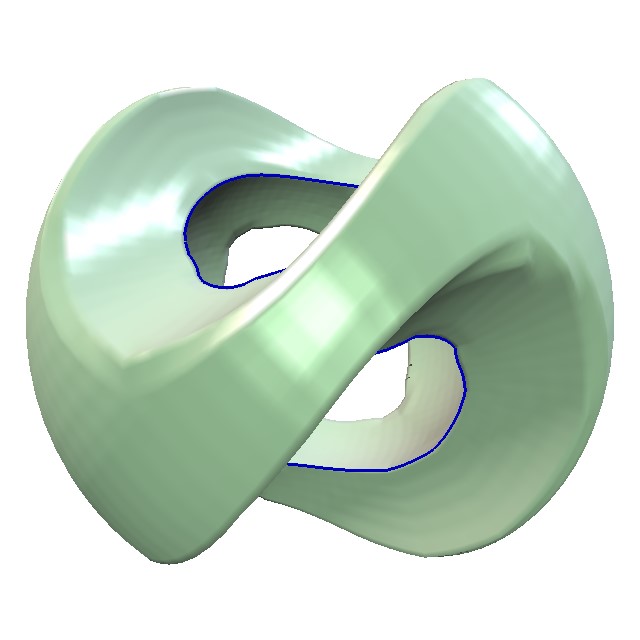} &
\includegraphics[height=0.45\textwidth]{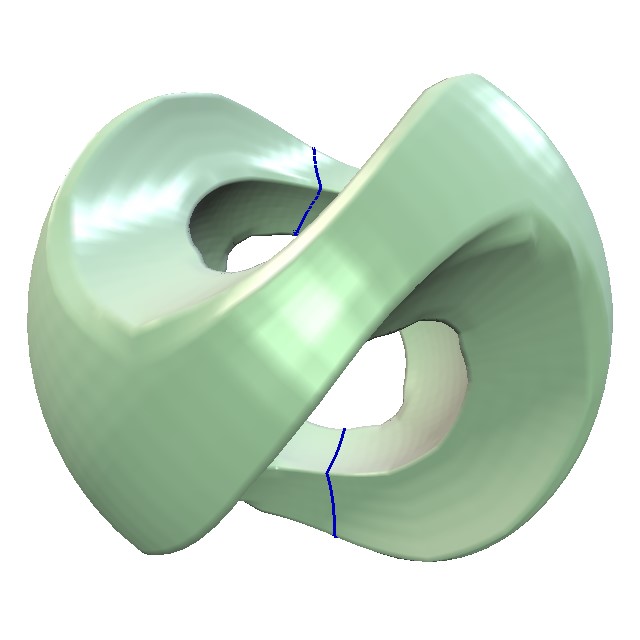}\\
(a) tunnel loops &(b) handle loops\\
\end{tabular}
\caption{Tunnel and handle loops of the sculpture model.}
	\label{fig:scultpure_homology}
\end{figure}

\begin{figure}[h]
\centering
\begin{tabular}{cc}
	\includegraphics[height=0.45\textwidth]{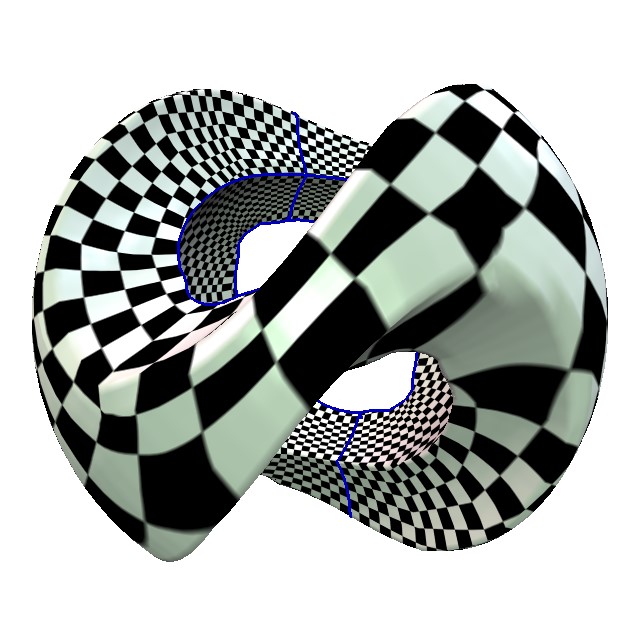} &
\includegraphics[height=0.45\textwidth]{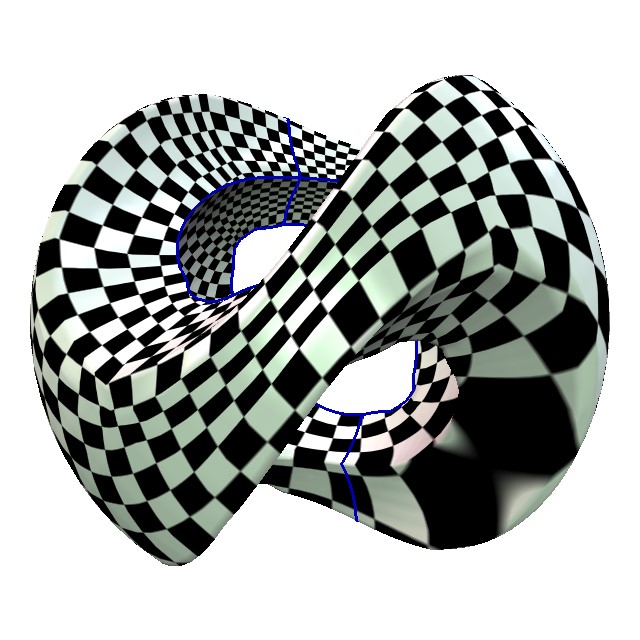}\\
(a) tunnel loops &(b) handle loops\\
\end{tabular}
\caption{Holomorphic 1-form basis of the sculpture model.}
	\label{fig:scultpure_holomorphic}
\end{figure}

Fig.~\ref{fig:scultpure_quad} shows another genus two quad-mesh of a sculpture model which has $12$ valence-5 vertices and $4$ valence $3$ vertices. Fig.~\ref{fig:scultpure_homology} shows the tunnel and handle loops of the sculpture model. Fig.~\ref{fig:scultpure_holomorphic} shows the basis of the holomorphic differentials $\varphi_0$ and $\varphi_1$.

We set $\varphi_0$ as $\omega_0$ and verify the Abel-Jacobi condition by computing the Abel-Jacobi map $\mu(D_Q - 4(\omega_0))$. The period matrix $A$ of the Riemann surface $S_Q$ is
$$
{\footnotesize
\left(
\begin{array}{cc}
0.99999997~-~i2.8e-09& -0.24999994+i 2.745e-08 \\
0.99999999+i1.13e-08 & 0.50000015+i4.1e-08
\end{array}
\right).
}
$$
The period matrix $B$ is
$$
{\footnotesize
\left(
\begin{array}{cc}
-4.8789098e-19 +i 0.50669566 & 7.5894152e-19 +i 0.15720634 \\
-7.5894152e-19+i 0.73261918 & 4.8789098e-19 +i 0.589281
\end{array}
\right).
}
$$
The Abel-Jacobi map image of the divisor is
$$
\mu(D_Q-4(\omega_0)) =
{\footnotesize
\left(
\begin{array}{c}
-1.568599999979e-05+i3.69999999994e-06 \\
4.28899999998e-05 -i 4.400000000182e-07
\end{array}
\right),
}
$$
which is very close the $0$.

%
%Fig.~\ref{fig:scultpure} shows a genus $2$ surface model. The left column shows the quad-mesh $Q$, the right column shows the holomorphic 1-form $\omega_0$. The quad-mesh has $4$ valence-3 and $12$ valence $5$ singularities, namely, the induced meromorphic quartic differential $\omega_Q$ has $4$ poles and $12$ zeros. $D_Q= \sum_{i=1}^4 p_i - \sum_{j=1}^{12}q_j$, by direction computation, we obtain
%\[
%    \mu\left(\sum_{i=1}^4 p_i\right) = 0.567433+i0.416453, \quad \mu\left(\sum_{j=1}^{12}q_j\right) = 1.6929478+i1.2496373.
%\]
%The holomorphic 1-form has two zeros, $(\omega_0) = p_0 + p_1$,
%\[
%    \mu(p_0) = 0.249593+0.103909i, \quad \mu(p_1) = 0.0317843+i0.104389.
%\]
%Therefore
%\[
%    \mu(D_Q - 4(\omega_0)) = 5.6e-06~-~i7.7e-06,
%\]
%which is very close to $0$.

\subsection{Meromorphic Quartic Differential Construction}
\label{sec:algorithm}

In practice, a surface $\Sigma$ embedded in $\mathbb{R}^3$ is given, the induced Euclidean metric is $\mathbf{g}$. Our purpose is to construct a quadrilateral mesh $\mathcal{Q}$ on $(\Sigma,\mathbf{g})$. It is highly desirable that the metric induced by $\mathcal{Q}$, $\mathbf{g}_Q$, is conformal equivalent to the original metric $\mathbf{g}$. From above discussion, we see the equivalence between a quad-mesh $\mathcal{Q}$ and the meromorphic quartic differential $\omega_Q$ on the Riemann surface $S_Q$. Therefore $\omega_Q$ is also a meromorphic differential on $(\Sigma,\mathbf{g})$.

%\subsection{Algebraic Representation}
%
%Meromorphic differentials have global algebraic representations, this can be utilized to generate and edit quad-meshes interactively. In theory,a Riemann surface $S$ can be represented as an algebraic curve $f(z,w)=0$, where $f$ is a multi-variable polynomial. The algebraic curve representation is not unique, different representations of $S$ differ by a bi-rational transformation. All the meromorphic functions on the algebraic curve can be represented as a rational function $p(z,w)/q(z,w)$, where $p,q$ are multi-variable polynomials.
%
%In the following, we explain how to find global representations of meromorphic quartic differentials on genus zero and genus one Riemann surfaces. The same algorithm can be applied for high genus Riemann surfaces as well.
%

Suppose $(\Sigma,\mathbf{g})$ is a genus zero closed surface, then it is conformal equivalent to the unit sphere $\mathbb{S}^2$, namely $\mathbb{C}\cup \{\infty\}$. The unit sphere $\mathbb{S}^2$ has two charts $z$ and $w$, $z$ covers $\mathbb{C}$, $w$ covers $\mathbb{C}\cup\{\infty\}\setminus \{0\}$, the transition map is $w=1/z$. We define the simplest meromorphic differential
\[
    \omega_0 = dz = -\frac{1}{w^2} dw.
\]
Given a quad-mesh $\mathcal{Q}$ on $\Sigma$, then $f=\omega_Q / w_0^4$ is a meromorphic function. On the sphere, any meromorphic function is a rational function, the general global representation of $\omega_Q$ on $\mathbb{C}$ is given by:
\begin{equation}
    \omega_Q = \frac{(z-p_1)(z-p_2)\cdots (z-p_{n-8})}{(z-q_1)(z-q_2)\cdots(z-q_n)}dz^4,
    \label{eqn:meromorphic_differential_0}
\end{equation}
where $\{p_1,p_2,\dots,p_{n-8}\}$ are the simple zeros, $\{q_1,q_2,\dots,q_n\}$ are the simple poles of $\omega_Q$. The $\infty$ point is the zero of order $8$. Each simple zero corresponds to a valence $5$ vertex, each simple pole corresponds to a valence $3$ vertex. The zeros (poles) can be merged into high order ones, thus Eqn.~\ref{eqn:meromorphic_differential_0} becomes
\begin{equation}
    \omega_Q = \frac{\Pi_{i=1}^k (z-p_i)^{n_i}}{\Pi_{j=1}^l (z-q_j)^{m_j}}dz^4,
    \label{eqn:meromorphic_differential_0_1}
\end{equation}
where $\sum_{j=1}^l m_j - \sum_{i=1}^k = 8$, $m_i$'s and $n_j$'s are positive integers.

Eqn.~\ref{eqn:meromorphic_differential_0} and Eqn.~\ref{eqn:meromorphic_differential_0_1} give all possible meromorphic quartic differentials on the sphere, some of them have infinite trajectories. The ones with finite trajectories correspond to $\omega_Q$ for some quad-mesh $Q$.

\begin{figure*}[h]
\centering
\begin{tabular}{cc}
	\includegraphics[height=0.45\textwidth]{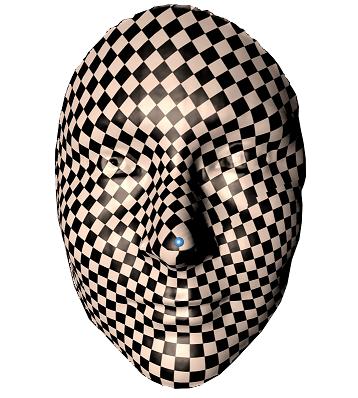} &
	\includegraphics[height=0.45\textwidth]{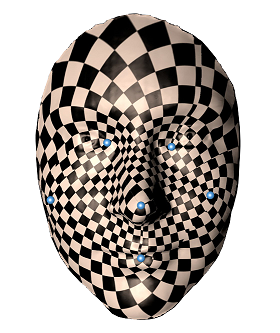}\\
	(a) 1 pole & (b) 6 poles \\
\end{tabular}
\caption{The visualization of meromorphic quartic differential on a face model.}
	\label{fig:sophie}
\end{figure*}

Fig.~(\ref{fig:sophie}) left frame shows two different meromorphic quartic differentials constructed in this way. In the left frame, we compute a Riemann mapping to conformally map the facial surface $S$ onto the planar unit disk using the method in \cite{Gu_JDG_2018}, then pick a pole $z_1==0.706853+0.52086i$. The meromorphic quartic differential is given by
\[
    \omega = \frac{1}{z-z_1} (dz)^4.
\]
The we find a path $\gamma$ from $z_1$ to the boundary, slice the surface along $\gamma$ to get a simply connected domain $\bar{S}$. In this domain, we choose one branch of $\sqrt[4]{\omega}$. By integrating $\sqrt[4]{\omega}$ on $\bar{S}$, we map $\bar{S}$ onto the complex plane. Then we use checkerboard texture mapping to visualize the trajectories of $\omega$.

Fig.~(\ref{fig:sophie}) right frame demonstrates a meromorphic quartic differential with $6$ poles,
\begin{equation}
    \omega \frac{1}{\prod_{i=1}^{6}(z-z_i)}(dz)^4
    \label{eqn:sophie_differential}
\end{equation}
where the poles are given in Tabl.~\ref{tab:poles_sophie}.

\begin{table}[h!]
{\footnotesize
\begin{tabular}{lll}
$z_1=0.451559+0.21962i$&$z_2=0.45696+0.617636i$&$z_3=0.706853+0.52086i$\\
$z_4=0.533522+0.407822i$&$z_5=0.250598+0.471244i$&$z_6=0.747474+0.28336i$
\end{tabular}
\caption{Poles in the meromorphic differential Eqn.~\ref{eqn:sophie_differential}.\label{tab:poles_sophie}}
}
\end{table}
We compute a cut graph $\gamma$ connecting all the singularities and the boundary using the algorithm in \cite{adams1995hitchhiker}, then slice the surface along $\gamma$ to get a simply connected domain $\bar{S}$. By integrating a branch of $\sqrt[4]{\omega}$, we map $\bar{S}$ onto the complex plane.

\begin{figure*}[h]
\centering
\begin{tabular}{cc}
\includegraphics[height=0.45\textwidth]{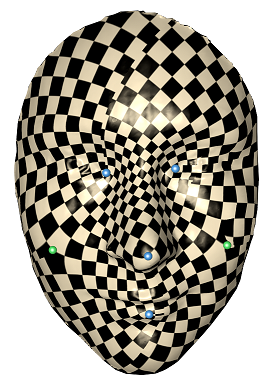}&
\includegraphics[height=0.45\textwidth]{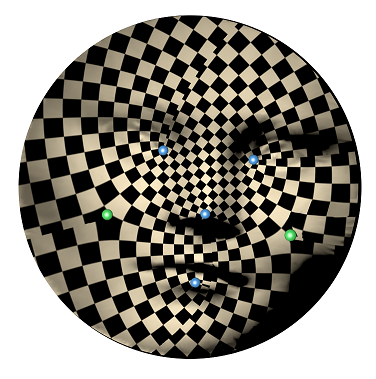}\\
	(a) 4 holes and 2 zeros & (b) Riemann mapping of (a)\\
\end{tabular}
\caption{The visualization of meromorphic quartic differential on a face model.}
	\label{fig:sophie_2}
\end{figure*}

Fig.~(\ref{fig:sophie_2}) shows a meromorphic quartic differential with $4$ poles and $2$ zeros,
\begin{equation}
    \omega \frac{(z-p_1)(z-p_2)}{\prod_{i=1}^{6}(z-q_i)}(dz)^4,
    \label{eqn:sophie_differential_2}
\end{equation}
where the poles and zeros are given in Tab.~\ref{tab:sophie_differential_2}.
\begin{table}[h!]
{\footnotesize
\begin{tabular}{lll}
$p_1=0.250598+0.471244i$ & $p_2=0.747474+0.28336i$ & $q_1=0.451559+0.21962i$  \\ $q_2=0.45696+0.617636i$  & $q_3=0.706853+0.52086i$  & $q_4=0.533522+0.407822i$ \\
\end{tabular}
\caption{The zeros and poles in the meromorphic differential in Eqn.~\ref{eqn:sophie_differential_2}.\label{tab:sophie_differential_2}}
}
\end{table}

\begin{figure*}[h]
\centering
\begin{tabular}{ccc}
	\includegraphics[height=0.33\textwidth]{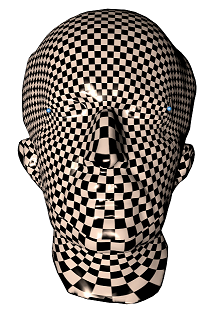} &
	\includegraphics[height=0.33\textwidth]{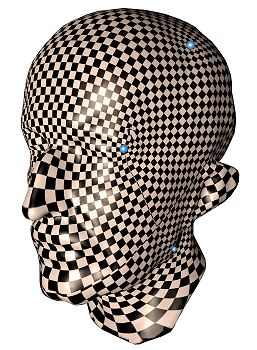} &
	\includegraphics[height=0.33\textwidth]{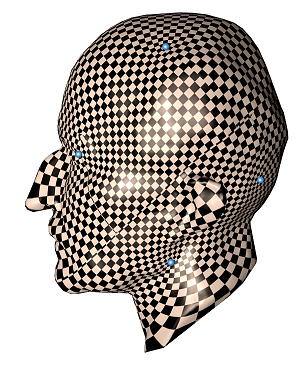} \\
	\includegraphics[height=0.33\textwidth]{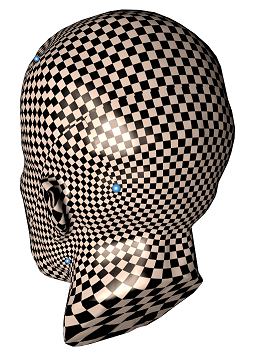} &
	\includegraphics[height=0.33\textwidth]{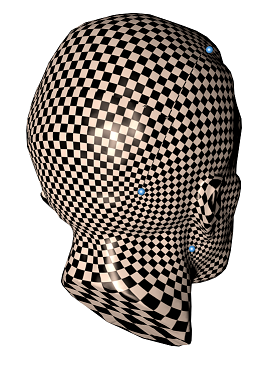} &
	\includegraphics[height=0.33\textwidth]{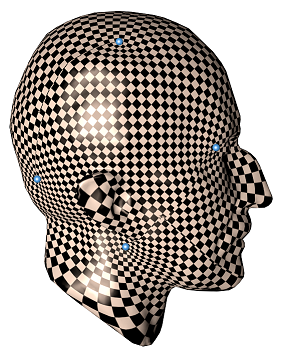}\\
\end{tabular}
\caption{A meromorphic quartic differential on the Max-Planck sculpture model with $8$ poles.}
	\label{fig:max_planck}
\end{figure*}

\begin{figure*}[h]
\centering
\begin{tabular}{cccc}
	\includegraphics[height=0.23\textwidth]{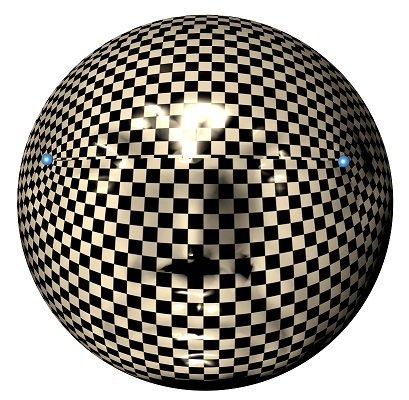} &
\includegraphics[height=0.23\textwidth]{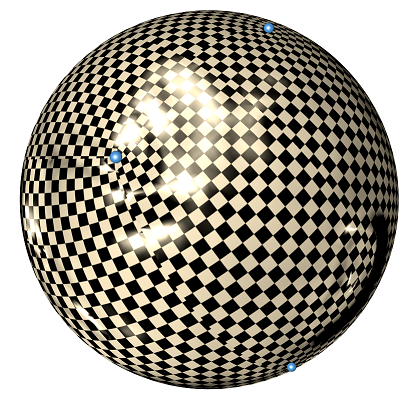} &
\includegraphics[height=0.23\textwidth]{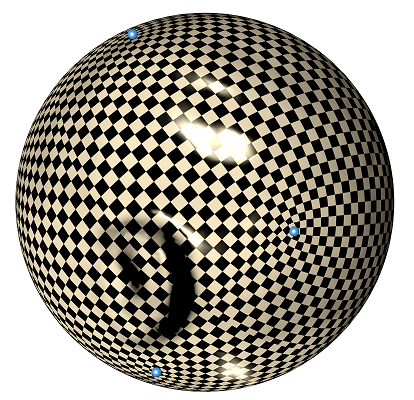} &
\includegraphics[height=0.23\textwidth]{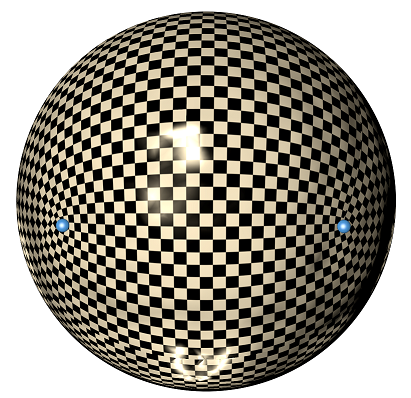} \\
\includegraphics[height=0.23\textwidth]{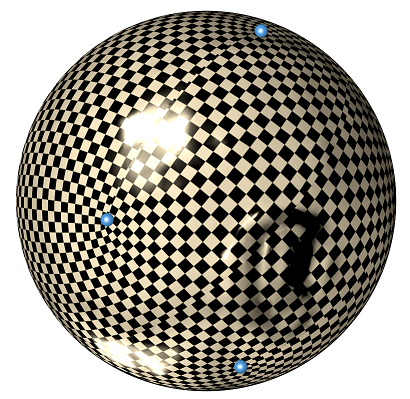} &
\includegraphics[height=0.23\textwidth]{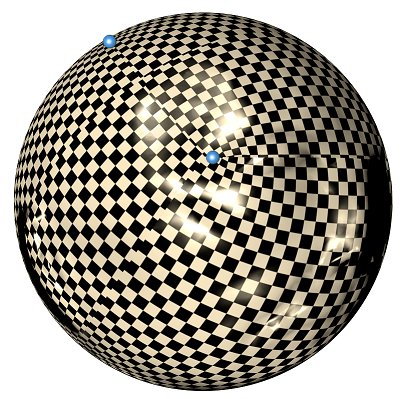} &
\includegraphics[height=0.23\textwidth]{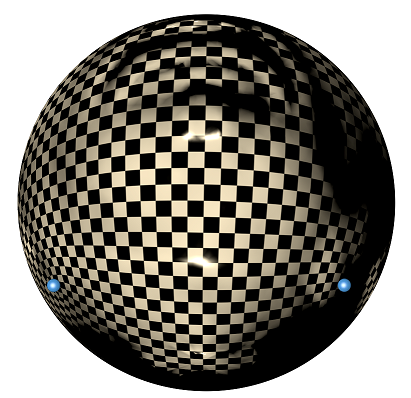} &
\includegraphics[height=0.23\textwidth]{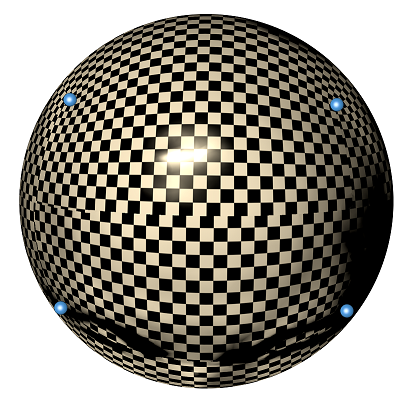}
\end{tabular}
\caption{The meromorphic quartic differential the conformal spherical image of the Max-Planck sculpture model with $8$ poles.}
	\label{fig:max_planck_spheres}
\end{figure*}

Fig.~\ref{fig:max_planck} demonstrates a meromorphic quartic differential on the Max Planck head model. First, we conformally map the model onto the unit sphere using the algorithm in \cite{DBLP:journals/tmi/GuWCTY04}, then the sphere is mapped onto the complex plane using the stereo-graphic projection. On the complex plane, we construct a meromorphic quadratic differential as
\begin{equation}
    \omega = \frac{1}{\prod_{i=1}^{8}(z-q_i)}(dz)^4,
    \label{eqn:max_planck_differential}
\end{equation}
where the poles are given in Tab.~\ref{tab:max_planck_differential}.

\begin{table}[h!]
{\footnotesize
\begin{tabular}{llll}
$q_1=1.32607+1.3106i$&$q_2=-1.27859+1.27903i$&$q_3=1.30017-1.25335i$\\
$q_4=-1.29695-1.28728i$&$q_5=0.471821-0.46131i$&$q_6=-0.443743-0.468551i$\\
$q_7=0.452511+0.463833i$&$q_8=-0.450766+0.468879i$&
\end{tabular}
\caption{The poles of the meromorphic differential on Max Planck head model in Eqn.~\ref{eqn:max_planck_differential}.\label{tab:max_planck_differential}}
}
\end{table}

Similarly, a cut graph $\gamma$ connecting all singularities is computed, the surface is sliced along the cut graph to get a simply connected domain. We integrate $\sqrt[4]{\omega}$ on the domain, to flatten the surface to the complex plane. By checkerboard texture mapping, the trajectories of $\omega$ can be visualized.

\begin{figure*}[h]
\centering
\begin{tabular}{ccc} \includegraphics[height=0.33\textwidth]{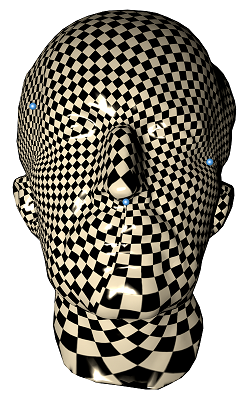}& \includegraphics[height=0.33\textwidth]{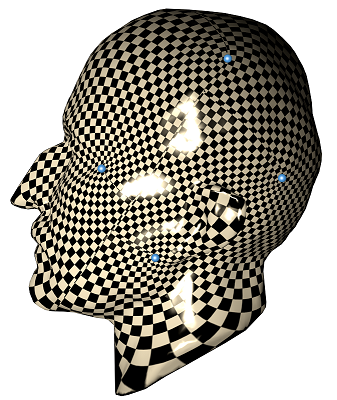}& \includegraphics[height=0.33\textwidth]{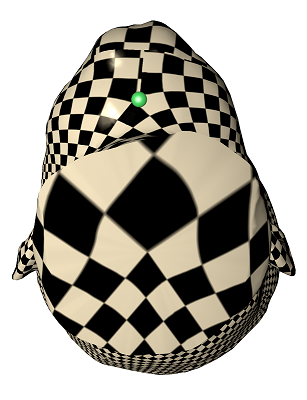}\\
\includegraphics[height=0.33\textwidth]{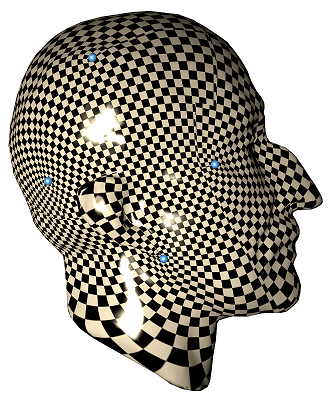}&	\includegraphics[height=0.33\textwidth]{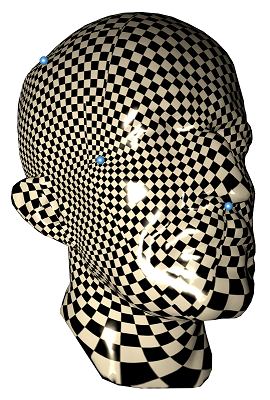}&	\includegraphics[height=0.32\textwidth]{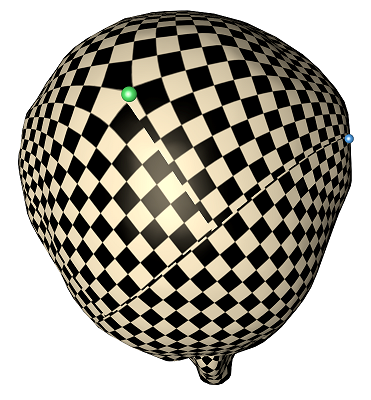}\\
\end{tabular}
\caption{A meromorphic quartic differential on the Max-Planck sculpture model with $10$ poles (blue) and $2$ zeros (green).\label{fig:max_planck_2}}
\end{figure*}

\begin{figure*}[h]
\centering
\begin{tabular}{cccc} \includegraphics[height=0.23\textwidth]{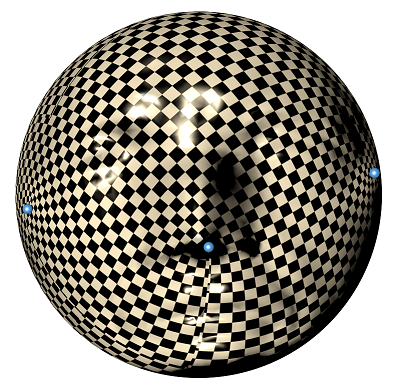} &
\includegraphics[height=0.23\textwidth]{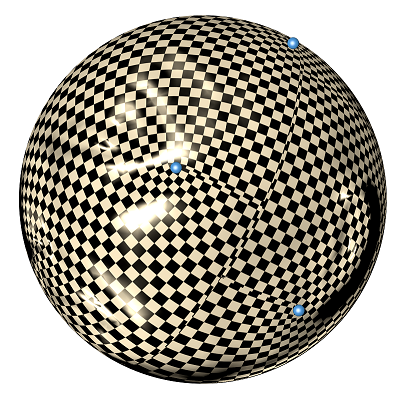} &
\includegraphics[height=0.23\textwidth]{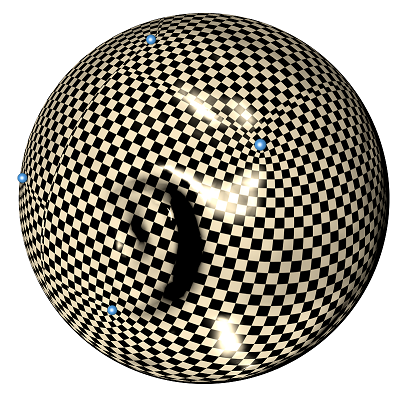} &
\includegraphics[height=0.23\textwidth]{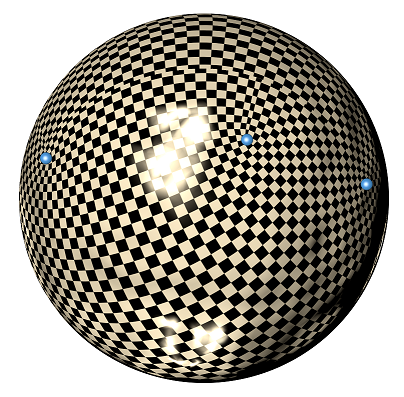} \\
\includegraphics[height=0.23\textwidth]{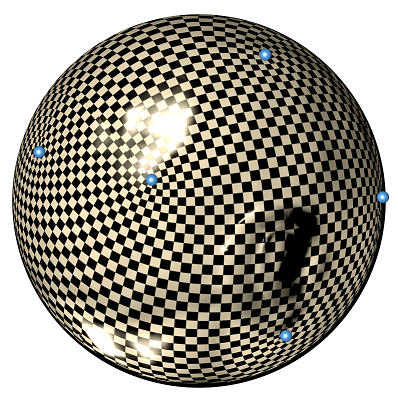} &
\includegraphics[height=0.23\textwidth]{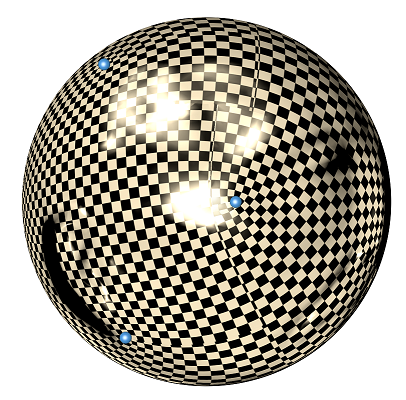} &
\includegraphics[height=0.23\textwidth]{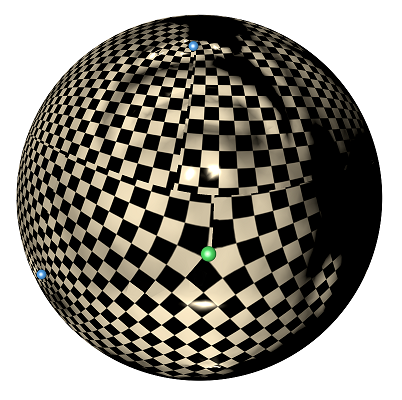} &
\includegraphics[height=0.23\textwidth]{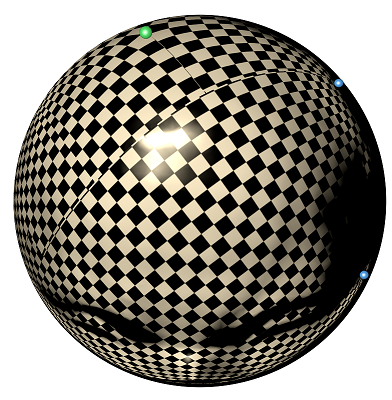}
\end{tabular}
\caption{The meromorphic quartic differential the conformal spherical image of the Max-Planck sculpture model with $10$ poles and $2$ zeros.}
	\label{fig:max_planck_spheres_2}
\end{figure*}

The second meromorphic quartic differential is shown in Fig.~\ref{fig:max_planck_2}, which has the form on the complex plane
\begin{equation}
    \omega = \frac{\prod_{i=1}^2 (z-p_i)}{\prod_{j=1}^8(z-q_j)} (dz)^4,
    \label{eqn:max_planck_differential_2}
\end{equation}
where the zeros and the poles are given in Tab.~\ref{tab:max_planck_2}.

\begin{table}[h!]
{\footnotesize
\begin{tabular}{lll}
$p_1=0.898261+3.24367i$& $p_2=-0.00810208-0.25253i$ & $q_1=0.00289177+0.255035i$ \\ $q_2=1.03926-3.32305i$ & $q_3=1.5921+0.915034i$&$q_4=-1.61079+0.858346i$\\
$q_5=1.61098-0.845895i$& $q_6=-1.63865-0.894138i$ & $q_7=0.559829+0.296053i$\\
$q_8=-0.564592+0.307631i$ & $q_9=0.555884-0.307683i$&$p_{10}=-0.550573-0.311611i$
\end{tabular}
\caption{The zeros and poles of the meromorphic differential on the Max Planck head model in Eqn.~\ref{eqn:max_planck_differential_2}.\label{tab:max_planck_2}}}
\end{table}

\section{Conclusion}
\label{sec:conclusion}

This work proves the equivalence between quadrilateral meshes and meromorphic quartic differentials on Riemann surfaces with finite trajectories (theorem \ref{thm:differential_quad} and \ref{thm:differential_quad}); Second, this work gives Abel-Jacobi condition for the configurations of singularities of quad-meshes (theorem \ref{thm:Abel_Jacobian_condition}), the condition can be easily verified algorithmically; Third, the meromorphic quartic differentials can be constructed on surfaces using their global algebraic representation, this leads to a novel direction for quad-mesh generation based on meromorphic differentials. Our experimental results demonstrate that the method is theoretically rigorous, practically simple and efficient. This opens a new direction for quad-mesh generation based on Riemann surface theory.

In future, we will explore the methods to guarantee the finiteness of all the trajectories of meromorphic differentials, the algorithm for divisor optimization to satisfy the Abel-Jacobi condition and generalize the Abel-Jacobi condition to hexahedral meshes.

\bibliographystyle{plain}
\bibliography{references,quad}

\end{document}